\pgfplotsset{compat=1.17}
\renewcommand*\l@author[2]{}
\renewcommand*\l@title[2]{}
\title{A Real-Analytic Approach to Differential-Algebraic Dynamic Logic}
\author{Jonathan Hellwig\orcidID{0009-0009-5530-3256} \and
  Andr\'{e} Platzer\orcidID{0000-0001-7238-5710}
}
\institute{
  Karlsruhe Institute of Technology, Karlsruhe, Germany\\
\email{\{jonathan.hellwig, platzer\}@kit.edu}}
\begin{document}
\maketitle
\begin{abstract}
  This paper introduces a proof calculus for real-analytic differential-algebraic dynamic logic, enabling correct transformations of differential-algebraic equations.
  Applications include index reductions from differential-algebraic equations to ordinary differential equations.
  The calculus ensures compatibility between differential-algebraic equation proof principles and (differential-form) differential dynamic logic for hybrid systems.
  One key contribution is ghost switching which establishes precise
  conditions that decompose multi-modal systems into hybrid systems,
  thereby correctly hybridizing sophisticated differential-algebraic dynamics.
  The calculus is demonstrated in a proof of equivalence for a Euclidean pendulum to index reduced form.
  \keywords{Differential dynamic logic  \and Differential-algebraic
  equations \and Proof calculus \and Theorem proving}
\end{abstract}
\section{Introduction}

Hybrid systems with explicit (vectorial) ordinary differential
equations (ODEs) \(\D{x}=\genDE{x}\) are important models for
cyber-physical systems \cite{henzinger_theory_1996,branicky_general_1996,davoren_logics_2000} and have been axiomatized in
differential dynamic logic \dL
\cite{DBLP:journals/jar/Platzer08,DBLP:journals/jar/Platzer17,Platzer10,Platzer18},
which is the basis of the theorem prover \KeYmaeraX
\cite{DBLP:conf/cade/FultonMQVP15} that was used to prove
correctness properties of aircraft, train and robotics systems \cite{cleaveland_formally_2022,kabra_verified_2022,lin_safe_2022}.
This paper extends (modern differential-form) differential dynamic
logic \cite{DBLP:journals/jar/Platzer17} with
\emph{differential-algebraic programs (DAPs)}.
DAPs allow the relation of $x$ and its time-derivative
\(\D{x}\) to be defined by a formula of first-order real arithmetic.
In particular, DAPs include ordinary equations given by implicit equations such as \(f(x',x) = 0\). Beyond that, DAPs also support composition using logical operators such as conjunctions \(\land\) and disjunctions \(\lor\), enabling modular and flexible specification of system behavior.
This gives rise to easy combination and extension of systems.

Consider a model of an electric circuit consisting of a resistor \(R\), a capacitor \(C\) and a source \(s\). Each component of the circuit can be given by a local constitutive equation:
\[
(R)\; v_R = R\,i_R,\qquad
(C)\; i_C = C\,v_C',\qquad
(s)\; v_s = 0,
\]
where \(v_\bullet\) and \(i_\bullet\) are the voltage and the current of each component.
Because the three elements are connected in series, Kirchhoff's laws contribute two connection constraints \(i_R = i_C\) and \(v_R + v_C = v_s\).
Putting everything together, the electric circuit is described by the single real-arithmetic formula \[
  v_R = R i_R \land i_C = C v_C' \land v_s = 0 \land i_R = i_C \land v_R + v_C = v_s.
\]
Hence, the complete model emerges by simply conjoining each component's constitutive equation with the corresponding connection constraints.
This illustrates the key advantage of differential-algebraic programs: complex systems are built compositionally from independent component.
The same principle applies in other domains---such as hydraulic networks, chemical-process flowsheets, and robotic mechanisms---where differential-algebraic equations arise naturally.
For that reason, differential-algebraic equations are the mathematical foundation for the popular modeling tool \emph{Modelica} \cite{fritzson_modelica_1998}.
However, the DAPs presented in this paper extend beyond differential algebraic equations by allowing inequality and quantified constraints such as \(x' \leq f(x)\) and \(\exists y (g(x,y)=0 \land x'=f(x))\).
This added expressiveness supports the modelling of uncertainty and of dynamics with non-polynomial right-hand sides.

Yet, the expressive power and flexibility of DAPs comes with added complexities in numerical and symbolic analysis.
DAPs may have hidden constraint that are only revealed when differentiating one of their algebraic constraints.
Consider the \emph{Euclidean pendulum}\footnote{Its equations follow from the Lagrangian
\(\mathcal{L}(x,y,\lambda)=g x - \tfrac{1}{2}(v^{2}+w^{2}) + \frac{\lambda}{2}\bigl(x^{2}+y^{2}-1\bigr).\)} \cite[p.~13]{mattsson_index_1993} --- a classical example of a system of differential-algebraic equations.
Its dynamics are given by the formula
\begin{align*}
  x'=v \land v'=\lambda x \land y'=w \land
  w'=\lambda y - g \land x^2 + y^2 = 1,
\end{align*}
where \( x, y \) are the pendulum's position coordinates, and \( v, w \) are the corresponding velocities.  
The scalar \( \lambda \) is a Lagrange multiplier implicitly constrained by the holonomic condition \( x^2 + y^2 = 1 \), and \( g \) denotes the gravitational constant.  
Suppose we wish to analyze the pendulum's behavior with respect to some safety property.
Although \(x\) and \(y\) are confined to the unit circle, their time
derivatives \(x'=v\) and \(y'=w\) are governed by the equations
\(v'=\lambda x\) and \(w'=\lambda y-g\), so the evolution of
\(x\) and \(y\) depends on the unknown multiplier \(\lambda\).
This multiplier is not a free parameter of the system:
solutions exist only if \(\lambda\) satisfies the hidden constraint \[
  \lambda = gy - (v^2 + w^2).
\]
That relation is not explicit in the original model.
It is only uncovered after repeatedly differentiating the algebraic constraint \(x^{2}+y^{2}=1\) and performing algebraic manipulations.
Without this hidden constraint, we cannot directly analyse the behaviour
of \(x\) and \(y\), because \(\lambda\) would remain undetermined.
While such derivations are feasible for small examples, carrying them
out by hand is tedious and error-prone, and doing so outside a formal
logical framework forfeits any end-to-end guarantee of correctness.
Because hidden constraints are a fundamental feature of DAPs, we need systematic and sound principles to uncover them, and a logic expressive enough to state and prove the resulting properties.

To meet this need, this paper introduces the
\emph{differential-form differential-algebraic dynamic logic} (\dAL).
This new logic gives native support to hybrid programs whose continuous
behaviour is specified by DAPs and
provides a sound proof calculus for reasoning about their dynamic
properties.
This calculus fulfills two roles.
First, it guarantees correctness-preserving transformations of DAPs.
Second, it enables rigorous proofs of system properties --- including
safety and reachability --- such proving that \emph{every} state
reachable by a hybrid system with DAPs satisfies a given formula.

A first precursor, \emph{differential-algebraic dynamic logic}
(\DAL) \cite{DBLP:journals/logcom/Platzer10,Platzer10}, already extended the original, non-differential-form variant of
differential dynamic logic.
The differential-form logic \dAL presented in this paper builds on \DAL
and changes it in several essential ways.
The main contributions of this paper are:
\begin{enumerate}
  \item \emph{Logical foundations for model transformations}:
    This calculus establishes a rigorous basis for differential and algebraic transformations of DAPs  --- an indispensable capability for practical work with DAPs that was not available in \DAL.  
    It also lays the groundwork for future integration with larger-scale
    modelling frameworks.
  \item \emph{Axioms for non-normalized differential constraints}:
    The prior DAL proof calculus required a syntactic normalization of
    DAPs.
    The new axioms remove that prerequisite: they apply directly and
    modularly to \emph{non-normalized} formulas, streamlining proofs and
    simplifying implementation.
  \item \emph{Ghost switching axiom}: We introduce a new \emph{ghost switching axiom} that gives   exact criteria for when a DAP can be split into a hybrid program with discrete mode changes.     This enables a sound structural decomposition of multi-mode DAPs
  and gives formal justification to a transformation that has so far
  been used in the literature without proof \cite{DBLP:conf/hybrid/TanMP22,DBLP:journals/tcad/KabraMP22}.
  \item \emph{Differential-form reasoning}: Our proof calculus employs
  differential forms, linking it to existing differential-invariant
  completeness results \cite{DBLP:journals/jacm/PlatzerT20}.  
  The same completeness guarantee now extends to every DAP that can
  be reduced to an ordinary differential equation.
  \item \emph{Local Hilbert-type proof calculus}: Compared to a prior
    DAL sequent calculus, our formulation represents a significant
    step toward a uniform substitution calculus for \dAL.
    Uniform substitution externalizes side-condition checking, simplifying
    proof manipulation and theorem prover implementations
    \cite{DBLP:journals/jar/Platzer17}.
\end{enumerate}

\section{Differential-algebraic Dynamic Logic}

This section briefly introduces the syntax and semantics of
(differential-form) differential-algebraic dynamic logic. The
focus of this discussion lies on the continuous fragment and the
interested reader is referred to the literature
\cite{DBLP:journals/jar/Platzer17,Platzer18} for a treatment of the discrete fragment and the resulting hybrid systems.

\subsection{Syntax}

The set \(\variables\) contains all variables and for each variable \(x
\in \variables\), there is a differential variable \(x' \in \variables\). By
\(\variables' \subseteq \variables\) we denote the set of all differential
variables. A differential variable $x'$ represents the instantaneous
rate of change of $x$, much like a derivative in calculus. However,
rather than being a function of time explicitly it is treated as an
independent variable within the system. This abstraction allows
differentials to be represented algebraically, making it possible to
manipulate them syntactically.
\begin{definition}[Terms and Formulas \protect{\cite{DBLP:journals/jar/Platzer17,DBLP:books/sp/Platzer18}}]
  The language of \dAL \emph{terms} is generated by the following grammar, where \(x\) is a variable, \(c\) is a rational constant, and \(f\) is a function symbol:
  \begin{equation*}
    e, g ::= x \vl c \vl f(e_1, \dots, e_k) \vl e \cdot g \vl  e +
    g \vl (e)'.
  \end{equation*}
  The language of \dAL \emph{formulas} is generated by the following grammar, where \(x\) is a variable, \(P\) a predicate symbol, and \(\alpha\) a differential-algebraic program (\rref{def:DAP}):
  \begin{equation*}
    F, G ::= e \leq g \vl P(e_1, \dots, e_k) \vl F \land G
    \vl \neg F
    \vl \forall xF \vl [\alpha]F.
  \end{equation*}
\end{definition}

\begin{remark}
  Several logical and modal operators such as \(<, \geq, >, \neq, \lor, \rightarrow, \exists\) and \( \langle \cdot \rangle\) are \emph{definable} in terms of more primitive constructs. For instance, disjunction can be defined as \(A \lor B \equiv \neg (\neg A \land \neg B)\), and the diamond modality as \(\langle \alpha \rangle P \equiv \neg [\alpha] \neg P\).
  Throughout this work, we also adopt vectorial notation: \(\tuple{x} = (x_1, \dots, x_n)\) denotes a tuple of variables. For vector-valued function symbols \(f(\tuple{x}) = (f_1(\tuple{x}), \dots, f_n(\tuple{x}))\), the expression \(\tuple{x} = f(\tuple{x})\) abbreviates the conjunction \(\bigwedge_{i=1}^n x_i = f_i(x_1,\dots,x_n)\).
\end{remark}

\begin{remark}
  For a variable \(x\) and formula \(F\) we write \(x \in F\) to mean that \(x\) is a free variable of that formula (See \cite[Def. 7]{DBLP:journals/jar/Platzer17}).
\end{remark}

\begin{definition}[Programs \protect{\cite{DBLP:journals/jar/Platzer17,DBLP:books/sp/Platzer18}}] \label{def:DAP}
  \emph{Differential-algebraic programs} are generated by the following grammar, where \(x\) is a variable, \(\tuple{x}\) is a tuple of variables, \(e\) is a term, and \(F\) is a
  formula of first-order real arithmetic:
  \begin{equation*}
    \alpha, \beta ::= x:= e \vl ?F \vl
    \daesys{\tuple{x}}{F} \vl \alpha;\beta \vl \alpha \cup \beta \vl \alpha^*.
  \end{equation*}
\end{definition}

\begin{remark}
  In this syntax, an assignment is denoted by \(x:=e\), a test by \(?F\), a nondeterministic choice by \(\alpha \cup \beta\), and a nondeterministic loop by \(\alpha^*\). For any \( n \in \mathbb{N} \), the notation \(\alpha^n\) denotes the \(n\)-fold sequential composition of \(\alpha\).
\end{remark}
\begin{remark}
  In order to disambiguate and flexibilize differential-algebraic
equations, the notation chosen here is explicit about which variables
evolve in the DAP and uses logical operators to combine continuous evolutions
\(\daesys{x,y,z}{\D{x}=f(x,z)\land\D{y}=g(y)\land z=y}\).
\end{remark}
\subsection{Semantics}
A \dAL state is a function \(\omega : \variables \rightarrow \R\) that assigns a real value to each variable in the set of all variables \(\variables\).
The set of states is denoted by \(\states\).
Given a subset \(X \subseteq \variables\), we write \(X^\complement = \variables \setminus X \) for its complement.
For a variable \(x\) and a real value \(r \in \R\), we write \(\omega_x^r\) to denote the state obtained from \(\omega\) by replacing its value at \(x\) with \(r\).
Finally, the partial derivative of a differentiable function \(h\) with respect to \(x\) is written as \(\partialDeriv{x}{h}\).
\begin{definition}[Semantics of terms
  \cite{DBLP:journals/jar/Platzer17}]\label{def:terms}
  The \emph{semantics} \(\omega \llbracket e \rrbracket \in \R\) of a \dAL term \(e\) in state \(\omega \in
  \states\) is inductively as follows:
  \begin{enumerate}
    \item \(\omega \sem{ x } = \omega(x)\) for a variable \(x
      \in \mathcal{V}\),
    \item \(\omega \sem{ f(e_1, \dots, e_k) } =
        f(\omega\sem{ e_1 }, \dots, \omega\sem{ e_k
      })\) for a function $f:\R^k \rightarrow \R$,
    \item \(\omega\sem{ e + g } = \omega \sem{
      e } + \omega \sem{ g } \),
    \item \(\omega\sem{ e \cdot g } = \omega
      \sem{ e } \cdot \omega \sem{ g } \),
    \item \(\omega \sem{ (e)'} = \sum_{x \in
        \variables}\omega(x') \partialDeriv{x}{\omega\sem{ e
      }}\).
  \end{enumerate}
\end{definition}
The differential \((e)'\) characterizes how the value of \(e\)
changes locally in response to variations in its free variables
\(x\). Specifically, it expresses this dependence as a function of
the corresponding differential symbols \(x'\), which represent the
rates of change of \(x\). We note that each term \(e\) can only have finitely many free variables. Therefore, the sum in the definition of \((e)'\) is always finite.

\begin{definition}[Semantics of formulas
  \cite{DBLP:journals/jar/Platzer17}]
  The \emph{semantics} of a \dAL formula \(F\) is the set \(\sem{F} \subseteq \states\) of states
  in which \(F\) is true and defined as:
  \begin{enumerate}
    \item \(\sem{e \leq g} = \{\omega \in \states \mid \omega\sem{e}
    \leq \omega\sem{g}\}\),
    \item \(\sem{P(e_1,\dots,e_n)} = \{\omega \in \states \mid (\omega\sem{e_1},\dots,\omega\sem{e_n}) \in \sem{P}\}\),
    \item \(\sem{\neg F} = \states \setminus \sem{F}\),
    \item \(\sem{F \land G} = \sem{F} \cap \sem{G}\),
    \item \(\sem{\forall x F} = \{\omega \in \states \mid \forall r \in \R : \omega_x^r \in \sem{F} \}\),
    \item \(\sem{[\alpha] F} = \{\omega \in \states \mid \forall (\omega, \nu) \in \sem{\alpha} : \nu \in \sem{F}\}\).
  \end{enumerate}
\end{definition}
The \emph{box modality} \([ \alpha ] F\) is used to express \emph{safety properties}, statements that hold after all possible executions of the system.  
If \([ \alpha ] F\) is true in a given state, it guarantees that no matter how the system evolves according to the program \( \alpha \), the condition \(F\) will always hold.
The \emph{diamond modality} \(\langle \alpha \rangle F\), on the other hand, is used to express reachability properties, statements asserting that some execution of the system can lead to a desired state.
If \(\langle \alpha \rangle F\) holds, then there exists at least one execution of the program \( \alpha \) that reaches a state where \(F\) is true.

\begin{definition}[Real-analytic function]
  On an open set \(U \subset \R\) the function \(f: U \rightarrow \R\) said to be \emph{real-analytic}, if for every \(t_0 \in U\) there exist \(\epsilon > 0\) and coefficients $a_k\in\reals$ such that
  \[f(t) = \sum_{k=0}^{\infty}a_k(t-t_0)^k\]
  for all \(t \in (t_0 - \epsilon, t_0 + \epsilon)\).
\end{definition}

\begin{definition}[Flow] \label{def:flow}
  A mapping \(\flowI\) is called a \emph{flow} in a tuple \(\tuple{x}\), if \(\varphi(\cdot)(\tuple{x})\) is real-analytic on \((a,b)\) and satisfies the following two properties:
  \begin{enumerate}
    \item \(\varphi(t)(\bar{x}') = \frac{d \varphi(t)(\bar{x})}{dt}(t)\) for all \(t \in (a,b)\),
    \item \(\varphi(t) = \varphi(0)\) on \((\bar{x} \cup \bar{x}')^\complement\) for all \(t \in (a,b)\).
  \end{enumerate}
  We write \(\varphi \in \analyticFlows{\bar{x}}{(a,b)}\) to denote the set of flows in \(\bar{x}\) defined on the interval \((a,b)\). For a
  formula \(F\) and interval $I$, we also write \(\varphi(I) \subseteq  \sem{F}\) to denote
  \(\varphi(t) \in \sem{F}\) for all \(t \in I\).
\end{definition}

\begin{remark} \label{rem:duration_zero}
  There always exists a flow \(\varphi \in \analyticFlows{\tuple{x}}{(-\epsilon,\epsilon)}\) that satisfies a given state \(\omega \in \states\) at zero. Define \(\varphi : (-\epsilon, \epsilon) \rightarrow \states\), \(\varphi(t)(\tuple{x}) = \omega(\tuple{x}')t + \omega(\tuple{x})\). We can verify that \(\varphi(0)(\tuple{x}) = \omega(\tuple{x})\) and \(\varphi(0)(\tuple{x}') = \frac{d \varphi(t)(\tuple{x})}{dt}(0) = \omega(\tuple{x}')\).
\end{remark}
\begin{definition}[Semantics of differential-algebraic programs] \label{def:pro}
  The semantics of a \emph{differential-algebraic program} \(\alpha\) is the
  transition relation \(\sem{\alpha}
  \subseteq \states \times \states\) and is defined as follows:
  \begin{flalign*}
    &\sem{\alpha \cup \beta} = \sem{\alpha} \cup \sem{\beta}, \\
    &\sem{\alpha ; \beta} = \sem{\alpha} \circ \sem{\beta},\\
    &\sem{x:=e} = \{(\omega, \omega_x^r) \vl r = \omega\sem{e}\}, \\
    &\sem{?F} = \{(\omega, \omega) \vl \omega \in \sem{F}\}, \\
    &\sem{\alpha^*} = \textstyle{\bigcup_{n \in \N} \sem{\alpha^n}}, \\
    &\sem{\daesys{\tuple{x}}{F}} = \{(\omega, \nu) \vl \text{There
      exist } T \geq 0, \epsilon > 0 \text{ and a flow } \varphi \in \analyticFlows{\tuple{x}}{-\epsilon, T + \epsilon}, \\
      &\quad \text{ with } \varphi([0,T]) \subseteq \sem{F} \text{ and } \omega =
    \varphi(0) \text{ and } \varphi(T) = \nu\}.
  \end{flalign*}
\end{definition}
\begin{remark}
  The semantics of differential-algebraic systems require flows to be real-analytic functions on an open neighborhood of \([0,T]\). In contrast, the differential constraint $F$ only needs to hold on the smaller closed set \([0,T]\). This property is crucial for the soundness of the proof calculus in \rref{sec:calculus}.
\end{remark}
\begin{example}\label{ex:boundary_blowup}
  We illustrate the consequences of defining flows on open neighborhoods through an example. Consider the following two differential-algebraic systems:
  \begin{align*}
      \alpha \equiv \daesys{x}{x'=-1 \land x \geq 0} \text{ and } \beta \equiv \daesys{x,y}{x'=-1 \land y^2 = x \land x \geq 0}.
  \end{align*}
  Fix the initial state \(\omega(x) = 1\), \(\omega(x') = -1\), \(\omega(y) = 1\), and \(\omega(y') = 1\).
  For some \(\epsilon > 0\), a flow \(\varphi_{\alpha} \in \analyticFlows{x}{-\epsilon, 1 + \epsilon}\) of system $\alpha$ is given by \(\varphi_\alpha(t)(x) = 1 - t\). 
  Since \(\varphi_\alpha(\cdot)(x)\) is a polynomial, it extends analytically beyond the interval [0,1], even though the differential constraint only holds on [0,1].

  For system $\beta$, a flow \(\varphi_\beta \in \analyticFlows{x,y}{-\epsilon, 1 + \epsilon}\) is given by \(\varphi_\beta(t)(x) = 1-t\) and \(\varphi_\beta(t)(y) = \sqrt{1 - t}\) for any value \(0\leq T < 1\). However, \(\varphi_\beta(\cdot)(y)\) does not extend to \([0,1]\), because its first derivative exhibits finite-time blow-up (see \rref{fig:flows}):
  \[
  \lim_{t \to 1} \frac{d \varphi_\beta(t)(x)}{dt} = \lim_{t \to 1} -\frac{1}{2\sqrt{1-t}} = -\infty.
  \]
\begin{figure}
  \centering
  \tikzset{
    >=stealth,
    line cap=round,
    myaxis/.style={
      axis lines = left,
      enlargelimits=false,
      ticks = none,
      xmin = 0, xmax = 1,
      ymin = 0,
      axis line style = {thick},
      every axis x label/.style={
        at={(current axis.right of origin)}, anchor=west, node font=\small
      },
      every axis y label/.style={
        at={(current axis.above origin)}, anchor=south, node font=\small
      }
    },
    flow/.style={blue!70!black, very thick}
  }
  \begin{tikzpicture}
    \begin{axis}[
      width  = 0.42\textwidth,
      height = 4cm,
      domain = 0:0.99,
      samples=100,
      xlabel = {$t$},
      ylabel = {$y$}
    ]
      \addplot[flow] {sqrt(1-x)};
    \end{axis}
  \end{tikzpicture}
  \hspace{1cm}
  \begin{tikzpicture}
    \begin{axis}[
      width  = 0.42\textwidth,
      height = 4cm,
      domain = 0:0.99,
      samples=100,
      xlabel = {$t$},
      ylabel = {$y'$}
    ]
      \addplot[flow] {-1/(2*sqrt(1-x))};
    \end{axis}
  \end{tikzpicture}

  \caption{The behaviour $\varphi_\beta(\cdot)$ at $y$ and $y'$ as $t \to 1$.}
  \label{fig:flows}
\end{figure}
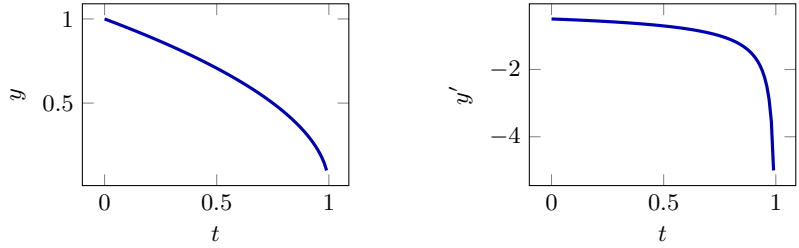

  While for system $\alpha$ and $\beta$ there exists a flow that satisfies the differential constraint each closed subinterval \(I \subset [0,1]\), only system $\alpha$ has a flow that also satisfies it on \([0,1]\).
  More generally, \rref{def:flow} ensures that \(\varphi\) remains well-behaved in a small neighborhood of \([0,T]\) and ensures all derivatives are bounded on \([0,T]\).
\end{example}
Differential-algebraic programs are strictly more expressive than ODE systems in \dL.
Below, we show three key examples:
\begin{enumerate}
    \item \textbf{Modeling Uncertainty}:
    Differential-algebraic programs can represent systems with
    continuous uncertainty by incorporating perturbations.
    For instance, the system
      \[
      \daesys{x, w}{x' = f(x, w) \land w^- \leq
      w \leq w^+}
      \]
    describes a dynamical system where the evolution of \( x \) is
    governed by \( f(x) \) with a disturbance \( w
    \), constrained within the range \( [w^-, w^+] \).

    \item \label{item:implicit}\textbf{Implicit System Characterization}:  
    Differential-algebraic programs can also define systems
    implicitly using existential quantification. Consider:
      \[
      \daesys{x}{\exists y (x' = y \land y^2 = x \land x > 0)}
      \]
    Here, the derivative of \( x \) is characterized by the square
    root of \( x \), as \( x' = y \) must hold for some \( y \)
    satisfying \( y^2 = x \). This implicitly defines the
    nonpolynomial differential equation with nonunique solutions:
      \[
      x' = \pm\sqrt{x}, \quad \text{for } x > 0.
      \]

    \item \textbf{Multimodal Systems}:
    These programs can naturally represent multimodal
    continuous programs. For example, the system
      \[
      \daesys{x}{(x' = f(x) \land x \leq 0) \lor (x' = f(x) \land x \geq 0)}
      \]
    models an evolution in the two domains \(x \leq 0\) and \(x \geq 0\). In \rref{sec:calculus}, we see
    that such programs can be decomposed into a loop with a
    nondeterministic choice between the two modes, provided that the
    handoff points are carefully managed.
    Note that there is no arbitrary switching between these two modes, because flows are real-analytic.
\end{enumerate}
\begin{remark}
  All ODE solutions in the \dL sense are real-analytic (By the Cauchy-Kovalevskaya Theorem \cite[Theorem 1.7.1]{krantz_primer_2002}, all ODE solution with real-analytic right-hand side are real-analytic).
  Thus, the \dAL DAP semantics generalize the \dL ODE semantics, allowing for a broader class of systems.
  In particular, the expressive term language of DAPs in \dAL permits non-Lipschitz differential constraints (See \ref{item:implicit} above).
  Non-Lipschitzian systems exhibit an implicit nondeterminism, such as spontaneous movement from rest \cite{bhatExampleIndeterminacyClassical1997}.
  To avoid unintended implicit behavior, we restrict flows to be real-analytic.
  When nondeterminism is needed, it can be explicitly introduced using discrete program constructs.
  For instance, a flow of the system \(\daesys{x}{x'=f(x) \lor x'=g(x)}\) never switches between the two right-hand sides. Because the flows of both modes are real-analtyic, the Identity Theorem for real-analytic functions \cite[Cor.~1.2.6 p.~14]{krantz_primer_2002} guarantees that any two flows agreeing on an open time interval must agree along their whole trajectories.
  If switching is desired, it must be made explicit with discrete assignments: \[
  \bigl(
     \daesys{x}{x' = f(x)}; x' := g(x)
     \cup
     \daesys{x}{x' = g(x)}; x' := f(x)
  \bigr)^{*}.
\]
  Choosing flows to be real-analytic strikes the balance between flexible modeling and control over system behavior while still allowing for a powerful proof calculus.
\end{remark}
\section{Axiomatic Proof Calculus} \label{sec:calculus}
In this section, we introduce a proof calculus for differential-algebraic programs from \rref{sec:calculus}.
\subsection{Progress, Exit and Entry formulas}
For the soundness of our proof calculus, it is crucial to understand how differential-algebraic equations evolve locally and under what conditions they transition between different modes.

For discrete-time programs, the notion of \emph{next step} is unambiguous: given a current state, there is a distinct next state at the subsequent time step.  
For instance, in the loop \((x := -x)^*\) a state with \( x > 0 \) is followed by a state with \( x < 0 \); each state has a distinct successor state.
In contrast, continuous-time programs differ fundamentally, because the states evolves of real intervals of time.
There is no smallest positive time instance, so there is no notion of next state --- one can speak only of states that are arbitrarily close in the future.

\begin{example} \label{ex:progress}
  Let us consider the differential-algebraic program
  \[
    \daesys{x}{x' = 1 \land -1 < x \land x \leq 1},
  \]
  started in the initial state \(\omega \in \states\). The unique flow of this system is given by \[\varphi(t)(x) = \omega(x) + t\] as long as \(-1 < \varphi(t)(x) \leq 1\) is satisfied. We consider three cases for the initial condition:
  \begin{enumerate}
    \item[\emph{(i)}]  \emph{Strictly inside the domain:}  
          If $-1 < \omega(x)<1$, a positive time increment $\varepsilon$ always
          exists with $\omega(x)+\varepsilon<1$.  
          Hence the system can advance for every such
          $\varepsilon$, and the flow may continue until $x$ reaches~$1$.
    
    \item[\emph{(ii)}] \emph{On the boundary:}  
          If $\omega(x)=1$, the inequality $\varphi(t)(x) \leq 1$ is already tight.
          Any positive advance would violate it, so the only admissible
          evolution is the \emph{blocked} flow
          $\varphi(t)(x)=\omega(x) + t$ for $t = 0$.
    
    \item[\emph{(iii)}] \emph{Outside the domain:}  
          If $\omega(x)>1$, the initial state breaks the constraint
          $\varphi(t)(x) \leq 1$ --- no execution is possible.
  \end{enumerate}
\end{example}
This example illustrates that there exist states in which the system can evolve and others where further progress is impossible. This distinction motivates the following definition:
\begin{definition}[Local progress formula \protect{\cite[p. 24]{DBLP:journals/jacm/PlatzerT20}}] \label{def:progress}
  Let \(\tuple{y}\) and \(\tuple{y}'\) be two tuples of variables that do not occur in the formula \(F\). The \emph{local progress formula} of the system
  \(\daesys{\tuple{x}}{F}\) is defined as
  \begin{align*}
    &\progress{\daesys{\tuple{x}}{F}} \; \equiv \;
    \exists \tuple{y} \exists \tuple{y}' \left(\tuple{x} = \tuple{y} \land \tuple{x}' = \tuple{y}' \land \langle \daesys{\tuple{x}}{F \lor (\tuple{x} = \tuple{y} \land \tuple{x}' = \tuple{y}')} \rangle (\tuple{x} \neq \tuple{y})\right).
  \end{align*}
\end{definition}
\begin{example}[continued]
  Continuing \rref{ex:progress}, the progress formula \(\progress{\daesys{x}{x' = 1 \land x \leq 1}}\) precisely captures the concept we previously discussed: the characterization of states in which the system can locally evolve, i.e., the set of states in which there exists a flow with positive duration.
  In this particular instance, the progress formula is equivalent to  \( x' = 1 \land -1 < x \land x < 1 \).
\end{example}
Equipped with the notion of progress formulas, we now define \emph{exit formulas}, which identify the state from which a flow may no longer continue, and \emph{entry formulas}, which mark the states from which a mode may start.
Together, these formulas make it possible to describe the exact conditions under which a system leaves one mode and enters the next.
\begin{definition}[Exit and Entry Formulas]\label{def:exit_entry}
  Let \(F^- \equiv [\tuple{x}':=-\tuple{x}']F\) denote the reverse-flow formula of a formula \(F\).
  The \emph{local exit formula} and the
  \emph{local entry formula} of the system \(\daesys{\tuple{x}}{F}\) is defined as
  \begin{align*}
    \exit{\daesys{\tuple{x}}{F}} &\equiv (\progress{\daesys{\tuple{x}}{\neg
    F}} \land F) \lor (\progress{\daesys{\tuple{x}}{F^-}}^- \land
    \neg F), \\
    \enter{\daesys{\tuple{x}}{F}} &\equiv (\progress{\daesys{\tuple{x}}{
    F}} \land \neg F) \lor (\progress{\daesys{\tuple{x}}{\neg
    F^-}}^- \land F).
  \end{align*}
\end{definition}
\begin{remark}
  A state $\omega$ satisfies the \emph{exit formula}, if and only if, one of two situations occurs:
  \begin{enumerate}
    \item[(i)] \emph{Forward flow}: Starting from \(\omega\), there exists a flow of positive duration that satisfies the constraint initially, but enter its negation immediately.
    \item[(ii)] \emph{Reverse flow}: Alternatively, beginning in \(\omega\), there exists a time-reverse flow of positive duration that starts in the negation and enters the constraint.
  \end{enumerate}
  Including the reverse-flow disjunct may seem surprising at first, but it is essential when the system's constraint is \emph{open}. Open constraints do not contain their boundary, so no forward flow can reach the boundary while staying in the constraint. By considering a reverse flow starting form the negation, we can capture the exit points on the boundary.
\end{remark}
\begin{example}[continued]
  \begin{figure}[h]
    \centering
    \begin{tikzpicture}[
        x=2cm,y=2cm,>=stealth,line cap=round,
        admiss/.style={very thick,black},
        flow/.style={->,blue!70!black,line width=0.9pt},
        exitarrow/.style={->,black,line width=0.6pt}]
      \begin{scope}
        \draw[->,thick] (-1.15,0) -- (1.15,0) node[below right] {$x$};
        \draw[->,thick] (0,-0.28) -- (0,1.35) node[above] {$x'$};
        \foreach \p/\lab in {-1/-1,1/1}
          \draw (\p,0) ++(0,-0.04) -- ++(0,0.08) node[below=3pt] {\lab};
        \draw[admiss] (-1,1) -- (1,1);
        \draw (-1,1) circle (0.04);
        \filldraw ( 1,1) circle (0.04);
        \foreach \x in {-0.9,-0.6,...,0.8}
          \draw[flow] (\x,1) -- ++(0.18,0);
        \foreach \x in {-0.45,0.25,0.75}{
          \draw[exitarrow] (\x,1)
            .. controls ++(0.18,0) and ++(0,-0.15)
            .. (\x+0.35,1.25);
        }
        \foreach \x in {-0.65,-0.05,0.65}{
          \draw[exitarrow] (\x,1)
            .. controls ++(0.18,0) and ++(0,0.15)
            .. (\x+0.35,0.55);
        }
      \end{scope}
      \begin{scope}[xshift=5cm]
        \draw[->,thick] (-1.15,0) -- (1.15,0) node[below right] {$x$};
        \draw[->,thick] (0,-0.28) -- (0,1.35) node[above] {$x'$};
        \foreach \p/\lab in {-1/-1,1/1}
          \draw (\p,0) ++(0,-0.04) -- ++(0,0.08) node[below=3pt] {\lab};
        \draw[admiss] (-1,1) -- (1,1);
        \filldraw (-1,1) circle (0.04);
        \draw ( 1,1) circle (0.04);
        \foreach \x in {-0.9,-0.6,...,0.8}
          \draw[flow] (\x,1) -- ++(0.18,0);
        \foreach \x in {-0.25,0.35,0.95}{
          \draw[exitarrow] (\x-0.35,1.25)
            .. controls ++(0,-0.15) and ++(0,0)
            .. (\x,1);
        }
        \foreach \x in {-0.65,-0.05,0.65}{
          \draw[exitarrow] (\x-0.35,0.55)
            .. controls ++(0,0.15) and ++(0,0)
            .. (\x,1);
        }
      \end{scope}
    \end{tikzpicture}
    \caption{%
    Exit and entry formula for the differential-algebraic program
    \(\daesys{x}{x' = 1 \land -1 < x \land x \leq 1}\).
    Left: a trajectory that starts on the constraint segment \(x' = 1 \land -1 < x \le 1\) (blue) can exit the mode by leaving the line
    tangentially (black arrows).  The excluded left endpoint is shown as an
    open circle, the included right endpoint as a filled circle.
    Right: for the entry formula the roles of the endpoints
    are reversed, and trajectories must approach the
    segment from outside before continuing along it.}
    \label{fig:entry-exit-dap}
  \end{figure}
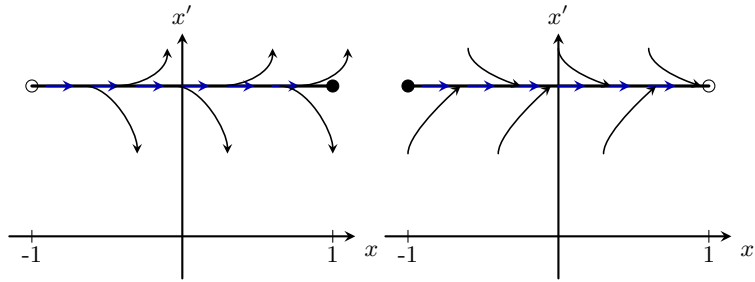
  We now apply the notions of \emph{entry formula} and \emph{exit formula} to the differential-algebraic program
  \[
    \daesys{x}{x' = 1 \land -1 < x \land x \leq 1}
  \]
  from \rref{ex:progress}.
  \emph{Exit formula:}
  A state satisfies the exit formula, if there exist a flow of positive duration that makes progress into the constraint's negation or a reversed flow can progress from the negation into the constraint.
  For the present system this reduces to
  \[
    \exit{\daesys{x}{x' = 1 \land -1 < x \land x \le 1}}
    \;\leftrightarrow\;
    x' = 1 \land -1 < x \land x \le 1 .
  \]
  The left-hand inequality remains strict, because each flow starting in the domain immediately evolves away from the left boundary, i.e., it is no exit point (See \rref{fig:entry-exit-dap} left).
  \emph{Entry formula:}
  In contrast to the exit formula, a flow may enter into the constraint through its left open boundary, even though it is not part of the constraint (See \rref{fig:entry-exit-dap} right). Consequently, both disjuncts from \rref{def:exit_entry} contribute to the entry formula. It reduces to
  \[\enter{\daesys{x}{x' = 1 \land -1 < x \land x \le 1}}
    \;\leftrightarrow\;
    x' = 1 \land -1 \leq x \land x < 1 .
  \]
\end{example}
A central lemma, ensuring the correctness of the definition of the exit and entry formulas, is the following generalization of the intermediate value theorem:
\begin{lemma}
  Let \(T \geq 0, \epsilon > 0\), \(\varphi \in \analyticFlowsInterval\) be a flow and \(F\)
  be a semi-algebraic set. Then, if we have \(\varphi(0) \in \sem{\neg F}\) and \(\varphi(T) \in \sem{F}\), there exists a \(\xi \in [0,T]\) such that \(\varphi(\xi) \in \sem{\enter{\daesys{\tuple{x}}{F}}}\) and \(\varphi(\xi) \in \sem{\exit{\daesys{\tuple{x}}{\neg F}}}\).
\end{lemma}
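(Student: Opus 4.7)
The approach is to identify the first time the flow enters $F$ and verify directly that the entry and exit disjuncts hold there. The key enabling fact is that, because $\varphi(\cdot)(\tuple{x})$ is real-analytic and $F$ is semi-algebraic, the preimage $S := \{t \in [0, T] : \varphi(t) \in \sem{F}\}$ decomposes into only finitely many interval components (possibly point-degenerate). I would justify this by putting $F$ in disjunctive normal form so that $S$ is a Boolean combination of sets $\{t : p(\varphi(t)) \sim 0\}$ with $p$ polynomial and $\sim \in \{<,\le,=,\ge,>\}$; each $t \mapsto p(\varphi(t))$ is real-analytic, hence either identically zero or has isolated zeros by the identity theorem, so only finitely many sign changes occur on the compact interval $[0, T]$.

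With this in hand, I set $\xi := \inf S$, which is well-defined because $T \in S$, and case-split on whether $\xi \in S$. If $\xi \in S$, then $\varphi(\xi) \in \sem{F}$ and $\xi > 0$ (since $\varphi(0) \in \sem{\neg F}$), and $\varphi(t) \in \sem{\neg F}$ for every $t \in [0, \xi)$. The time-reversed restriction $\psi(s) := \varphi(\xi - s)$ is then a real-analytic flow with reversed differentials that starts in $F$ at $s = 0$ and lies in $\neg F$ throughout $s \in (0, \xi]$; the identity theorem rules out $\tuple{x}$ being constant along $\varphi$ (otherwise $\varphi(0) = \varphi(T)$, contradicting $\varphi(0) \in \sem{\neg F}$ and $\varphi(T) \in \sem{F}$), so some small $s > 0$ witnesses $\tuple{x} \neq \tuple{y}$. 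This gives the reverse-flow progress witness for the second disjunct of both $\enter{\daesys{\tuple{x}}{F}}$ and $\exit{\daesys{\tuple{x}}{\neg F}}$ at $\varphi(\xi)$, using $(\neg F)^- \equiv \neg F^-$.

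If instead $\xi \notin S$, then $\varphi(\xi) \in \sem{\neg F}$; because $\xi = \inf S$ and $S$ has finitely many interval components, the leftmost component must have $\xi$ as an open-left endpoint and therefore contains an interval $(\xi, \xi + \delta]$ on which $\varphi(t) \in \sem{F}$. The forward restriction $\psi(s) := \varphi(\xi + s)$ — again non-constant in $\tuple{x}$ by the same identity-theorem argument — witnesses $\progress{\daesys{\tuple{x}}{F}}$ at $\varphi(\xi)$, which supplies the first disjunct of both $\enter{\daesys{\tuple{x}}{F}}$ and $\exit{\daesys{\tuple{x}}{\neg F}}$. The main obstacle I anticipate is the finite-component claim for $S$ together with the sign-pattern bookkeeping needed to rule out the degenerate possibility that the leftmost component of $S$ is a single point $\{\xi\}$ (which would in fact place $\xi \in S$ and reduce to the previous case), and more generally the care required when $F$ mixes strict and non-strict inequalities so that open, closed, and half-open components of $S$ coexist at the crossing.
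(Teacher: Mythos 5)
Your proposal follows essentially the same route as the paper's proof of \rref{lem:progress}: set $\xi := \inf\{\tau : \varphi(\tau) \in \sem{F}\}$, split on whether $\varphi(\xi) \in \sem{F}$, exhibit a reverse-flow progress witness in one case and a forward-flow witness in the other, and close both halves by noting that $\sem{\enter{\daesys{\tuple{x}}{F}}} = \sem{\exit{\daesys{\tuple{x}}{\neg F}}}$ by unfolding \rref{def:exit_entry}. Your "finitely many interval components" observation, justified by the identity theorem applied to each polynomial sign condition in a DNF of $F$, is precisely the content that the paper packages as \rref{lem:seq} and \rref{cor:zeros}, so the two proofs rest on the same analytic fact.

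One place to tighten: the step from "$\psi((0,\delta]) \subseteq \sem{F}$ and $\tuple{x}$ is non-constant along $\psi$" to "$\varphi(\xi) \in \sem{\progress{\daesys{\tuple{x}}{F}}}$" is not quite immediate. The diamond inside \rref{def:progress} requires the witnessing flow to remain inside $F \lor (\tuple{x}=\tuple{y}\land\tuple{x}'=\tuple{y}')$ up to a time where $\tuple{x}\neq\tuple{y}$, and \rref{lem:local_progress} only delivers $\progress{\daesys{\tuple{x}}{F}} \lor (\tuple{x}'=0 \land F)$, so one must still discard the second disjunct — here done by $\varphi(\xi)\in\sem{\neg F}$ (respectively $\varphi(\xi)\in\sem{F}$ in the reverse-flow case). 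Your identity-theorem non-constancy argument supplies the ingredient needed to reach $\tuple{x}\neq\tuple{y}$, but the proposal leaves the bridge to the formal progress-formula semantics implicit, which the paper handles via \rref{lem:local_progress}; it is worth making that bridge explicit in a polished write-up.
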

\begin{proof}
  See \rref{sec:proofs}.
\end{proof}
This lemma formalizes our intuition of entry and exit formulas: when a system's trajectory transitions from a state not satisfying \( F \) to one satisfying \( F \), then there must be an intermediate point simultaneously satisfying the exit formula \(\exit{\daesys{\tuple{x}}{F}}\) and the entry formula \(\enter{\daesys{\tuple{x}}{\neg F}}\).
Now, we finally have the necessary tools to formally define when two modes of a DAP are consistent:
\begin{definition}[Mode consistency]
  Let \(F\) and \(G\) be two formulas of first-order real arithmetic. The \emph{mode consistency formula} of formulas \(F\) and \(G\) is defined as
  \begin{align*}
    &\consis{F}{G} \equiv \left(\exit{\daesys{\bar{x}}{F}} \land
    \enter{\daesys{\bar{x}}{G}}\right) \rightarrow
    (F \leftrightarrow G).
  \end{align*}
\end{definition}
Intuitively, the formula \(\consis{F}{G}\) states that whenever a transition occurs from \( F \) to \( G \), then both constraints must either hold simultaneously or fail to hold together at the transition point.
If this condition is always satisfied along every trajectory of a system, it can be safely decomposed into a two distinct DAPs without losing any of its original behavior.
The importance of this definition for ensuring correct decompositions is illustrated in the following example:
\begin{example} \label{ex:mode_consistency}
  Consider the unrestricted system \[\alpha \equiv \daesys{x}{x' = 1 }.\]
  \emph{Inconsistent split:}
  One might try to decompose the system into the loop \[\beta \equiv \{\daesys{x}{x' = 1 \land x < 0 } \cup  \daesys{x}{x' = 1 \land x \geq 0}\}^*.\]
  The two modes meet at the single boundary state \(x' = 1 \land x = 0\).
  That state is precisely the conjunction of the exit formula of the first mode and the entry formula of the second mode.
  Because the first mode excludes \(x= 0\), it creates an implicit barrier, preventing a continuous transition between the two modes.
  In the original system \(\alpha\), the value of \(x\) evolves continuously from negative to positive values without restriction.
  Hence, \(\alpha\) has distinct behavior from \(\beta\) and the consistency condition fails.
  \emph{Consistent decomposition:} Including the boundary in both modes resolves the issue: \[
  \gamma \equiv \{\daesys{x}{x' =
  1 \land x \leq 0 } \cup \daesys{x}{x' = 1 \land x \geq 0}\}^*.\]
\end{example}
With the notion of mode consistency formally defined, we now present a sound proof calculus for \dAL, which allows rigorous reasoning about mode transitions and model transformations:
\begin{theorem}[Soundness]
  \label{thm:soundness}
  The following formulas are sound axioms of \dAL:
  \begin{flalign*}
    \text{\gray{DW}} &\; \blue{[\daesys{\tuple{x}}{F}]F} \\[4pt]
    \text{\gray{C}} &\; \blue{[\daesys{\tuple{x}}{F \land G}]P} \leftrightarrow  [\daesys{\tuple{x}}{G \land F}]P\\[4pt]
    \text{\gray{DE}} &\; \blue{[\daesys{\tuple{x}}{e = 0}]\der{e}=0} \quad \gray{(\tuple{x}' \not\in e)}\\[4pt]
    \text{\gray{DX}} &\; [\daesys{\tuple{x}}{F}]P
    \rightarrow \blue{[?F]P} \\[4pt]
    \text{\gray{DI}} &\; \left(\blue{[\daesys{\tuple{x}}{F}]P} \leftarrow
    [?F]P\right) \leftarrow \left(F \rightarrow
    [\daesys{\tuple{x}}{F}](P)'\right) \quad \gray{(\tuple{x}' \not\in P)}\\[4pt]
    \text{\gray{DR}} &\; \blue{[\daesys{\tuple{x}}{F}]P} \rightarrow \forall \tuple{y} \forall \tuple{y}'[\daesys{\tuple{x},\tuple{y}}{F \land G}]P \quad \gray{(\tuple{y},\tuple{y}'\not\in P, F)}\\[4pt]
    \text{\gray{AG}} &\; \left(\blue{[\daesys{\tuple{x}}{F}]P}
      \leftarrow \forall \tuple{y} \forall \tuple{y}'[\daesys{\tuple{x},\tuple{y}}{F \land
    G(y)}]P\right) \\
    &\quad \quad \leftarrow [\daesys{\tuple{x}}{F}]G(h(\tuple{x},\tuple{x}')) \quad \gray{(\tuple{y},\tuple{y}'
    \not\in P, F, G(\cdot),h(\cdot,\cdot))}\\[4pt]
    \text{\gray{BDG}} &\; (\blue{[\daesys{\tuple{x}}{F}]P} \leftrightarrow
    [\daesys{\tuple{x},\tuple{y}}{F \land \tuple{y}' = g(\tuple{x},\tuple{x}',y)}]P) \\
    &\quad \quad \leftarrow [\daesys{\tuple{x},\tuple{y}}{F\land \tuple{y}' = g(\tuple{x},\tuple{x}',\tuple{y})
    }]\;\lVert \tuple{y} \rVert^2 \leq h(\tuple{x},\tuple{x}') \quad \gray{(\tuple{y},\tuple{y}' \not\in P, F, h(\cdot,\cdot))}\\[4pt]
    \text{\gray{\(\text{GS}\)}} &\; \left(\blue{[\daesys{\tuple{x}}{F \lor G}]P}
      \leftarrow (F \lor G
        \rightarrow[\{\daesys{\tuple{x}}{F} \cup
    \daesys{\tuple{x}}{G}\}^*]P)\right) \\
    &\quad \quad \leftarrow [\daesys{\tuple{x}}{F \lor G}](\consis{F}{G}
    \land \consis{G}{F})
  \end{flalign*}
\end{theorem}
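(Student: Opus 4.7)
The plan is to prove each axiom separately by unfolding the flow semantics of \rref{def:pro} and exploiting three recurring tools: real-analyticity of flows and the attendant Identity Theorem cited in the paper, the trivial zero-duration flow supplied by \rref{rem:duration_zero}, and the chain-rule definition of $(e)'$ from \rref{def:terms}. The propositional-style axioms are almost immediate. DW is a restatement of the clause $\varphi([0,T]) \subseteq \sem{F}$ in the flow semantics, and C is the commutativity of $\land$ in the constraint. DX is the degenerate $T=0$ case so that every $\omega \in \sem{F}$ self-transitions. DE applies the chain rule to the constantly zero function $t \mapsto \varphi(t)\sem{e}$ and uses the side condition $\tuple{x}' \not\in e$ so that $(e)'$ only sums over flowing variables. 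DI adapts the standard \dL differential induction argument: the hypothesis $F \to [\daesys{\tuple{x}}{F}](P)'$ combined with a mean-value argument on $\varphi(\cdot)\sem{P}$ shows that $P$ is preserved along every flow admitted by $F$.

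The ghost axioms DR, AG, and BDG proceed by augmenting a flow in $\tuple{x}$ with fresh variables $\tuple{y}$ that do not occur in $P$ or $F$. For DR the flow is extended by holding $\tuple{y},\tuple{y}'$ constant, and freshness guarantees neither side's truth value changes. AG is a Darboux-style step that replaces the term $h(\tuple{x},\tuple{x}')$ by a fresh ghost $\tuple{y}$ whose evolution tracks $h$; once the ghost is installed the same freshness argument as for DR applies. BDG additionally needs the norm bound $\lVert \tuple{y} \rVert^{2} \leq h(\tuple{x},\tuple{x}')$ to preclude the finite-time blow-up of the type illustrated in \rref{ex:boundary_blowup}; the bound secures existence of a flow on the entire interval $[0,T]$, after which the equivalence reduces to the freshness argument of DR.

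The main obstacle is the ghost-switching axiom GS. The key step is to show that every flow $\varphi$ of $\daesys{\tuple{x}}{F \lor G}$ on $[0,T]$ decomposes into finitely many closed subintervals on each of which one of $F$ or $G$ alone is witnessed. Because $F$ and $G$ are semi-algebraic and $\varphi(\cdot)$ is real-analytic, the set of times in $[0,T]$ at which $F$ (respectively $G$) holds along $\varphi$ is a finite union of closed intervals, and the Identity Theorem rules out any accumulation of alternations, since accumulation would force agreement of $F$ and $G$ on an open set and hence along the whole trajectory. At each transition time the preceding intermediate value lemma supplies a state satisfying simultaneously the exit formula of the departing mode and the entry formula of the incoming one, so the premise $[\daesys{\tuple{x}}{F \lor G}](\consis{F}{G} \land \consis{G}{F})$ forces $F \leftrightarrow G$ there, making the transition an admissible handoff between the two branches of the nondeterministic choice. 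The flow $\varphi$ is therefore realized by an iteration of $\daesys{\tuple{x}}{F} \cup \daesys{\tuple{x}}{G}$, and the second hypothesis of GS, that the loop preserves $P$ from states satisfying $F \lor G$, delivers $P$ at $\varphi(T)$, yielding $[\daesys{\tuple{x}}{F \lor G}]P$.
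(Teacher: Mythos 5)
Your proposal tracks the paper's own strategy closely: direct semantic unfolding for DW, C, DX, DE, and DI (with the same mean-value argument for DI); flow extension with fresh ghost variables for DR, AG, and BDG (with the norm bound of BDG used exactly as the paper does, to exclude finite-time blow-up so the augmented flow persists on $[0,T]$); and for GS, finite switching by analyticity, the intermediate-value lemma to locate a transition state, the mode-consistency premise to force $F \leftrightarrow G$ there, and unrolling the loop hypothesis. The one justification that is off is the finite-switching step of GS: accumulation of alternations is not excluded because it ``would force agreement of $F$ and $G$ on an open set.'' In the paper's \rref{lem:seq}, an infinite alternation is ruled out because some atomic polynomial in the DNF of $F$ or $G$ would then have an accumulating sequence of zeros along the real-analytic flow, so that polynomial composed with the flow would vanish identically by the Identity Theorem, contradicting that the flow ever crosses that boundary. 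The conclusion you draw is still the right one, but that is the mechanism you would need to invoke. You also elide the point, flagged explicitly in the paper, that the BDG ghost flow must itself be shown real-analytic (via Cauchy--Kowalewsky on the analytic right-hand side), not merely globally defined; that is the genuinely new obligation over \dL.
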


\begin{proof}
  See \rref{sec:proofs}.
\end{proof}

The axioms for differential-algebraic systems in \rref{thm:soundness} capture essential properties of their evolution.
The \emph{Differential Weakening (DW)} axiom ensures that the formulas representing differential-algebraic equations remain valid throughout the system's execution.
The \emph{Commutativity (C)} axiom allows syntactic exchange of conjunctive formulas.
The \emph{Differential Effect (DE)} axiom justifies that the differential \(\der{e}=0\) of any (potentially differential-algebraic) equation \(e=0\) also holds when following \(e=0\) since equations are closed under differentials.
Thereby, DE also enforces consistency between a variable $x$ and its derivative $\D{x}$, ensuring that $x$ evolves in accordance with the specified dynamics.
The \emph{Differential Skip (DX)} axiom guarantees the existence of a trivial solution with zero duration whenever a formula \( F \) is initially satisfied, meaning the system may remain in place (see \rref{rem:duration_zero}).
The \emph{Differential Invariant (DI)} axiom provides a key condition for proving properties under the box modality.
For instance, if \( x \geq 0 \) initially and \( x' \geq 0 \) throughout, then \( x \geq 0\) at all future times, allowing reasoning about system invariants.
The \emph{Algebraic Ghost (AG)} axiom generalizes differential cuts \cite{DBLP:journals/jar/Platzer17,DBLP:journals/logcom/Platzer10} from \dL to encode observable system behavior.
This is particularly useful when applying the DE axiom to derivatives.
Since the side condition prevents DE from being directly applied to differential variables, we can introduce a new dummy variable \( y = x' \) to track the behavior of \( x' \), allowing the reasoning to proceed.
The \emph{Bounded Differential Ghost (BDG)} axiom allows expressing new differential relationships between variables as long as the new variable $y$ has bounds expressible in terms of prior variables.
The \emph{Ghost Switching (GS)} axiom allows modularly decomposing multi-mode differential algebraic systems into a loop with a non-deterministic choice.
This decomposition exploits analyticity and is only sound if the mode consistency formula is preserved along the evolution of the system.
The key property, that GS exploits for soundness is that an analytic flow can only switch finitely many times in and out of a semi-algebraic set in bounded time.
This is captured in the following lemma:
\begin{lemma} \label{lem:seq}
  Let \(T \geq 0 \) and \(\epsilon > 0\). Further, let \(\varphi \in \analyticFlowsInterval\) be a flow and \(F\)
  be a semi-algebraic set. Let \((\tau_k)_{k \in \N} \subseteq
  [0,T]\) be a strictly monotone sequence with \(\varphi(\tau_k) \in
  \sem{F}\) for all \(k \in \N\). Then, there exists a \(k_0 \in \N\)
  such that \(\varphi(\tau) \in \sem{F}\) for all \(\tau \geq
  \tau_{k_0}\) if the sequence is increasing and for all \(\tau \leq
  \tau_{k_0}\) if the sequence is decreasing.
\end{lemma}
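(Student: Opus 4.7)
The plan is to exploit the interaction between the real-analyticity of \(\varphi\) and the semi-algebraic structure of \(F\) via the Identity Theorem for real-analytic functions \cite[Cor.~1.2.6]{krantz_primer_2002}. A semi-algebraic set \(F\) is a finite Boolean combination of polynomial sign conditions \(p_i \bowtie 0\), and for each such polynomial the composition \(t \mapsto p_i(\varphi(t)(\tuple{x}))\) is real-analytic on \((-\epsilon, T+\epsilon)\) since \(\varphi(\cdot)(\tuple{x})\) is real-analytic there.

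First I would establish the structural fact that \(A := \{t \in [0,T] : \varphi(t) \in \sem{F}\}\) decomposes as a finite union of (possibly degenerate) intervals. For each polynomial \(p_i\) occurring in the defining Boolean combination of \(F\), the analytic function \(p_i \circ \varphi\) is either identically zero on the open interval \((-\epsilon, T+\epsilon)\), or by the Identity Theorem its zero set is discrete; in the latter case compactness of \([0,T]\) forces only finitely many zeros. Let \(Z \subseteq [0,T]\) be the (finite) union of these zero sets over all defining polynomials. On each connected component of \([0,T] \setminus Z\) every \(p_i \circ \varphi\) has a locally constant sign, and hence membership of \(\varphi(t)\) in \(\sem{F}\) is constant across that component. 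Consequently \(A\) is a finite union \(I_1 \cup \cdots \cup I_N\) of subintervals of \([0,T]\), with inclusion or exclusion of the endpoints from \(Z\) determined by the strictness of the defining inequalities.

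Next, I apply pigeonhole: the infinitely many \(\tau_k\) all lie in \(A\), so some \(I_j\) contains \(\tau_k\) for infinitely many \(k\). Because the sequence is strictly monotone, there is a \(k_0\) such that \(\tau_k \in I_j\) for every \(k \ge k_0\). Convexity of the interval \(I_j\), together with the monotonicity of \((\tau_k)_{k \ge k_0}\), then implies that the entire interval from \(\tau_{k_0}\) up to \(\sup_{k \ge k_0}\tau_k\) (respectively down to \(\inf_{k \ge k_0}\tau_k\) in the decreasing case) is contained in \(I_j \subseteq A\). This yields \(\varphi(\tau) \in \sem{F}\) throughout that interval, which is the conclusion of the lemma.

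The main obstacle I anticipate is the first step: justifying that a real-analytic function on an open neighborhood of the compact set \([0,T]\) can have only finitely many zeros on \([0,T]\) unless it is identically zero. This is where analyticity is essential (continuity alone is insufficient), and it ultimately rests on the Identity Theorem together with a straightforward compactness argument. The care needed here mirrors the discussion in \rref{ex:boundary_blowup}, where real-analyticity on the open enlargement \((-\epsilon, T+\epsilon)\) was imposed precisely to rule out pathological boundary behavior of the flow. Once the finite-decomposition of \(A\) is in place, the rest is an essentially combinatorial pigeonhole-plus-convexity argument.
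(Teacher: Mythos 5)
Your proof is correct and takes a genuinely different route from the paper's own argument. The paper works by contradiction: it writes $F$ in disjunctive normal form, sets aside the defining polynomials that are constant along $\varphi$ into a set $\mathcal{I}$, assumes the conclusion fails, applies the intermediate value theorem between consecutive sample points to manufacture a strictly monotone sequence of zeros of one of the remaining polynomials, and then derives a contradiction from \rref{lem:zeros}. You instead first prove the structural fact that $A=\{t\in[0,T]:\varphi(t)\in\sem{F}\}$ is a finite union of subintervals---using the same finiteness-of-zeros input, but to cut $[0,T]$ into finitely many pieces on which every defining polynomial has constant sign---and then conclude with a pigeonhole-plus-interval-convexity argument. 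Your route factors the analytic content (the finite decomposition of $A$) cleanly from the elementary combinatorics, sidesteps the paper's $\mathcal{I}$-reduction (polynomials whose composition with $\varphi$ is identically zero simply contribute nothing to your partition set $Z$, rather than needing special bookkeeping), and produces a reusable structural statement about $A$. One caveat you should state explicitly: as written, your argument establishes $\varphi(\tau)\in\sem{F}$ for $\tau\in[\tau_{k_0},\sup_{k\ge k_0}\tau_k]$ (respectively the corresponding lower interval in the decreasing case), not for all $\tau\in[\tau_{k_0},T]$ as a literal reading of ``for all $\tau\ge\tau_{k_0}$'' might suggest; that stronger reading is false in general (take $\varphi(t)(x)=t$, $F\equiv\tfrac12\le x\land x\le 1$, $T=2$, $\tau_k=1-\tfrac{1}{2k}$). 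The paper's contradiction argument has the same effective scope, since its intermediate value step implicitly places the witnessing point $t$ between $\tau_{k_0}$ and $\tau_{k_0+1}$, and the restricted version you prove is exactly what \rref{cor:zeros} requires, so your reading is almost certainly the intended one---but it is worth making the range of $\tau$ explicit.
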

The intuition behind this lemma is that if there exists a sequence of distinct time points where the flow satisfies \(F\), then eventually, the system must satisfy \(F\) indefinitely.
Otherwise, the flow would have to switch infinitely many times in and out of \(F\).
\begin{remark}
  \rref{lem:seq} does not hold on unbounded intervals as witnessed by the real-analytic function \(f:\R \rightarrow [-1,1]\), \(f(t) = \sin(t)\).
\end{remark}

\begin{remark}
  The formula \([\daesys{\tuple{x}}{F}][\daesys{\tuple{x}}{F}]P \leftarrow [\daesys{\tuple{x}}{F}]P \) is not a valid axiom of \dAL (despite a corresponding axiom of \dL). Consider, for instance:
  \[
    \daesys{x,t}{(x'=t \land t \leq 0 \lor x'=-t \land t \geq 0) \land t'=1}.
  \]
  There is no smooth transition from the first to the second mode, due to a discontinuity in the second derivative of \(x\). However, it is possible to identify two analytic flows that agree in both \(x\) and \(x'\). 
  More generally, \dAL is expressive enough to capture systems with non-smooth right-hand sides.
\end{remark}

Some derived axioms are quite useful in practice as well.
\emph{Algebraic Refinement (AR)} is for purely (real) algebraic transformations.
If $F$ implies $G$, then all ways of following $F$ also satisfy $P$ if all ways of following the more general $G$ do.
For instance, AR allows rewriting of \([\daesys{x,y}{x' = 1 \land y'x'=1}]P\) to \([\daesys{x,y}{x' = 1 \land y'=1}]P\).
The \emph{Differential Cut (DC)} axiom is one of the standard differential axioms of \dL \cite{DBLP:journals/jar/Platzer17} to equivalently augment a differential-algebraic equation $F$ with a property $G$ it obeys.
It can be derived by AG and does not introduce a new variable each time.
The \emph{Conjunctive Differential Effect ($\land$DE)} axiom is useful to augment the system's differential properties. For examples, we can transform the system \([\daesys{x}{y'= 2y \land x = y}]P\) to \([\daesys{x}{y'= x^2 \land x = y \land x' = y'}]P\), deriving the differential properties of \(x\) from its algebraic relationship to \(y\).
\begin{theorem}
  The following are derived axioms of \dAL:
  \begin{flalign*}
    &\text{\gray{AR}}\quad \left([\daesys{\tuple{x}}{G}]P \rightarrow
    \blue{[\daesys{\tuple{x}}{F}]P}\right)
    \leftarrow \forall \tuple{x} \forall \tuple{x}' (F \rightarrow G) \\[4pt]
    &\text{\gray{DC}}\quad \left([\daesys{\tuple{x}}{F \land G}]P \leftrightarrow
    \blue{[\daesys{\tuple{x}}{F}]P}\right) \leftarrow [\daesys{\tuple{x}}{F}]G\\[4pt]
    &\text{\gray{\(\land\)DE}} \quad \blue{[\daesys{\tuple{x}}{F \land e =
    0}]P} \leftrightarrow
    [\daesys{\tuple{x}}{F \land e = 0 \land (e)' = 0}]P \quad \gray{(\tuple{x}'
    \not\in e)}
  \end{flalign*}
\end{theorem}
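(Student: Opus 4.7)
The three derived axioms are most easily obtained in the order DC, then AR and $\land$DE (each of which invokes DC). The workhorse throughout is DR instantiated with an empty ghost tuple $\tuple{y}$: all side conditions on $\tuple{y},\tuple{y}'$ become vacuous, the outer universal quantifier disappears, and DR degenerates to the monotonicity principle $[\daesys{\tuple{x}}{F}]P \rightarrow [\daesys{\tuple{x}}{F \land G}]P$ that lets one strengthen a differential constraint without introducing fresh variables. Together with C, DW, DE, and AG this single specialization suffices for all three derivations.

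\textbf{DC.} I would derive the two implications separately. The direction $[\daesys{\tuple{x}}{F}]P \rightarrow [\daesys{\tuple{x}}{F \land G}]P$ is an immediate empty-ghost instance of DR. The direction $[\daesys{\tuple{x}}{F \land G}]P \rightarrow [\daesys{\tuple{x}}{F}]P$ is obtained from AG with $\tuple{y}$ empty and $G(\cdot)$ chosen as the constant predicate returning $G$: then $G(h(\tuple{x},\tuple{x}'))$ collapses to $G$, the AG hypothesis becomes precisely the DC premise $[\daesys{\tuple{x}}{F}]G$, and its conclusion reduces to exactly $[\daesys{\tuple{x}}{F \land G}]P \rightarrow [\daesys{\tuple{x}}{F}]P$.

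\textbf{AR and $\land$DE.} For AR, assume $\forall \tuple{x}\forall \tuple{x}'(F \rightarrow G)$ and $[\daesys{\tuple{x}}{G}]P$. Because a DAP only modifies $\tuple{x},\tuple{x}'$, the universal premise survives along the flow, so DW ($[\daesys{\tuple{x}}{F}]F$) combined with box-modality monotonicity on the valid implication $F \rightarrow G$ yields $[\daesys{\tuple{x}}{F}]G$. An empty-ghost DR applied to $[\daesys{\tuple{x}}{G}]P$ and then C transforms it into $[\daesys{\tuple{x}}{F \land G}]P$, and a concluding DC step with cut $G$ delivers $[\daesys{\tuple{x}}{F}]P$. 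For $\land$DE the easy direction is again an empty-ghost DR; for the converse, DE (whose side condition $\tuple{x}' \not\in e$ is inherited unchanged) gives $[\daesys{\tuple{x}}{e = 0}](e)' = 0$, DR followed by C lifts this to $[\daesys{\tuple{x}}{F \land e = 0}](e)' = 0$, and one DC step with cut $(e)' = 0$ produces the required biconditional.

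\textbf{Main obstacle.} The arithmetic is routine, so the only genuinely subtle step is justifying the empty-ghost specialization of DR and the constant-predicate instantiation of AG --- that is, checking that the side conditions $\tuple{y},\tuple{y}' \not\in \ldots$ hold vacuously once $\tuple{y}$ is empty and, for AG, that $G(\cdot)$ may be treated as a nullary predicate with no free occurrence of $\tuple{y}$. This is purely syntactic but warrants explicit attention so that the derivations are not misread as sleights of hand.
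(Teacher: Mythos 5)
Your proposal is sound and reaches the same three derived axioms, but by a genuinely different route than the paper. You derive DC first (directly from DR and AG), and then obtain AR and $\land$DE by invoking DC; the paper instead derives AR first from AG together with a $K$/DW/G-style argument, then derives DC from AR plus AG, and finally $\land$DE from DC, AR, and DE. The crucial structural difference is your systematic use of empty ghost tuples: you instantiate DR with $\tuple{y}=()$ to get the bare monotonicity principle $[\daesys{\tuple{x}}{F}]P \rightarrow [\daesys{\tuple{x}}{F \land G}]P$, and AG with $\tuple{y}=()$ and a nullary $G$ to get $\bigl([\daesys{\tuple{x}}{F}]P \leftarrow [\daesys{\tuple{x}}{F\land G}]P\bigr) \leftarrow [\daesys{\tuple{x}}{F}]G$. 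The paper pointedly avoids empty tuples: it always introduces a genuine fresh ghost $\tuple{y}$, pairs AG with a vacuous constraint ($\top \equiv x^2\ge 0$) when nothing needs to be tracked, and then disposes of the ghost with a DR step followed by AR (or C). That is more verbose but sidesteps the question of whether the DAP/quantifier/tuple machinery is defined for zero-length tuples.

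You correctly flag that question as the only nontrivial point. For DR the empty-tuple reading is unambiguous — $G$ is just a formula and nothing syntactic degenerates — so that specialization is safe. For AG the situation is less clean: the axiom's hypothesis and conclusion contain $G(y)$ and $G(h(\tuple{x},\tuple{x}'))$, and with $\tuple{y}$ empty those application sites have no argument to fill, so "nullary $G$" is a reinterpretation of the axiom schema rather than an instance of it as literally written. The paper's workaround (non-empty ghost plus trivial $\top$, then DR/AR cleanup) achieves exactly the same effect while staying within the axiom's stated shape, which is why the paper puts AR before DC: it needs AR available to discard the $\top$ conjunct. If you want your cleaner derivation order to stand as-is, you should explicitly argue (or stipulate) that AG remains well-formed under the empty-tuple/nullary-predicate instantiation; otherwise, you can keep the ghost non-empty and reuse the paper's $\top$-trick, at the cost of proving AR before DC as the paper does.
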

\begin{example}
  Based on the derived \(\land\)DE axiom, one might be tempted to assume the following equivalence holds:
  \[
  [\daesys{\tuple{x}}{F \lor e = 0 }]P \leftrightarrow [\daesys{\tuple{x}}{F \lor (e = 0 \land (e)'=0) }]P.
  \]
  However, this equivalence in not valid in general.
  To see why, consider the following differential-algebraic systems:
  \begin{align*}
    &\alpha \equiv \daesys{x}{x'=1} \equiv \daesys{x}{(x'=1 \land x \neq 0) \lor (x' = 1 \land x = 0)}, \\
    &\beta \equiv \daesys{x}{(x'=1 \land x \neq 0) \lor (x' = 1 \land x = 0 \land x' = 0)}.
  \end{align*}

  In system $\alpha$, \(x\) evolves according to \(x' = 1\), allowing continuous progression from \(x < 0\) to \(x > 0\) without obstruction.
  In contrast, system $\beta$ imposes an additional constraint at \(x = 0\), requiring \(x' = 0\) at that point.
  This restriction causes the system to get stuck at the transition between modes, as \(x'\) cannot jump from \(x'=1\) to \(x'=0\).
  Consequently, the behavior of the DAPs $\alpha$ and $\beta$ is not equivalent.
\end{example}
\subsection{Completeness}
Prior work shows that the natural number are definable using differential-algebraic systems \cite{DBLP:conf/tableaux/2007}.
This means that Gödel's incompleteness theorem applies.
However, there are some partial completeness results for differential invariants in dL \cite{DBLP:journals/jacm/PlatzerT20}.
As our proof calculus builds on top of \dL's proof calculus, our calculus inherits this completeness result for systems which are reducible to polynomial ODEs.
\section{Applications}
In this section, we demonstrate the usefulness of our proof calculus through two examples,  focusing on the reduction of the \emph{differentiation index} of DAPs.
The differentiation index of a differential algebraic equation is equal to the minimum number of times we need to differentiate the equation in order to uniquely determine the derivative of each state variable as a function of the other states variables \cite[p.~179, Def. 4]{campbell_index_1995}.
In the first example, we apply AR and DE in a straightforward manner, successively differentiating the algebraic constraints to obtain a reduced system.
In the second example, a state-dependent singularity in the differentiated constraint necessitates careful case destination using the GS axiom.
\begin{example}[continued]
  We now continue our discussion of the Euclidean pendulum from the introduction, showing
  how our \dAL calculus can be used to eliminate the unknown multiplier \(\lambda\).
  Let \(\tuple{x} = (x,y,v,w)\) denote the state variables, and define the differential-algebraic constraint
  \[D(\lambda) \equiv x'=v \land v'=\lambda
    x \land y'=w \land w'=\lambda y -
  g\land x^2 + y^2 = 1.\]
  Our goal is to simplify the \dAL-formula
  \[
    [\daesys{\tuple{x},\lambda}{D(\lambda)}] P,
  \]
  which states that property \(P\) holds after every run of the system.
  Since \dAL only defines the box modality, we use an arbitrary output predicate \(P\) to model equivalence indirectly.

  The multiplier \(\lambda\) appears explicitly in the constraint, but its time-derivative is unspecified.
  To uncover its implicit dynamics, we differentiate the holonomic constraint \(x^2 + y^2 = 1\) twice, thereby obtaining an explicit expression for \(\lambda\).
  \begin{prooftree}
    \AX$\fCenter \vdash [\daesys{\tuple{x},\lambda}{D(\lambda)  \land xv + yw =
    0 \land \lambda = gy-(v^2 + w^2)}]P$
    \LeftLabel{\(\R\),AR}
    \UI$\fCenter \vdash [\daesys{\tuple{x},\lambda}{D(\lambda)  \land xv + yw =
    0 \land x'v + xv' + y'w + yw' = 0}]P$
    \LeftLabel{\(\land\)DE}
    \UI$\fCenter \vdash [\daesys{\tuple{x},\lambda}{D(\lambda)  \land xv
    + yw = 0}]P$
    \LeftLabel{\(\R\),AR}
    \UI$\fCenter \vdash [\daesys{\tuple{x},\lambda}{D(\lambda)  \land 2xx' + 2
    yy' =0}]P$
    \LeftLabel{\(\land\)DE}
    \UI$\fCenter \vdash [\daesys{\tuple{x},\lambda}{D(\lambda) }]P$
  \end{prooftree}
  Finally, we use algebraic refinement and differential refinement
  to remove \(\lambda\) from the formula with \(h(x,y,v,w) = gy -
  (v^2 + w^2)\):
  \begin{prooftree}
    \AX$\fCenter \vdash [\daesys{\tuple{x}}{D(gy-(v^2 + w^2)) \land xv
    + yw = 0}]P$
    \LeftLabel{DR}
    \UI$\fCenter \vdash \forall \lambda \forall \lambda' [\daesys{\tuple{x},\lambda}{D(h(x,y,v,w)) \land xv + yw = 0 \land \lambda = h(x,y,v,w)}]P$
    \LeftLabel{\(\forall\)i, AR}
    \UI$\fCenter \vdash [\daesys{\tuple{x},\lambda}{D(\lambda) \land xv + yw = 0 \land \lambda = gy-(v^2 + w^2)}]P$
  \end{prooftree}
  The open goal contains the transformed system, in which
  the multiplier \(\lambda\) has been eliminated.
  The result is a fully specified ODE, enabling the application of completeness results for differential invariants \cite{DBLP:journals/jacm/PlatzerT20}.
\end{example}
Note, that we did not need to use GS in the previous example, because we were able to find a globally defined explicit expression for \(\lambda\). We see that this is not always the case in our next example:
\begin{example}
  Consider what appears to be a simplified version of the system:
  \[
  \daesys{x,y}{x' = y \land x^2 + y^2 = 1}.
  \]
  The variable \(x\) evolves according to the value of \(y\), and both are constrained to lie on a circle of radius one. Our goal is to derive an explicit expression for \(y'\). We start by differentiating the circle constraint using \(\land\)DE and do some equivalence transformations with AR:
  \begin{prooftree}
    \AX$\fCenter\vdash [\daesys{x,y}{x' = y \land x^2 + y^2 = 1 \land (x + y')y = 0}]P$
    \LeftLabel{\(\R\), AR}
    \UI$\fCenter\vdash [\daesys{x,y}{x' = y \land x^2 + y^2 = 1 \land xx' + y'y = 0}]P$
    \LeftLabel{AR,\(\R\),\(\land\)DE}
    \UI$\fCenter\vdash [\daesys{x,y}{x' = y \land x^2 + y^2 = 1}]P$
  \end{prooftree}
  Now, we observe that there are two cases for the constraint: \(x + y' = 0\) or \(y = 0\). We split the system into a disjunction of two modes using AR:
  \begin{prooftree}
    \AX$\fCenter\vdash [\daesys{x,y}{(x' = 0 \land x^2=1 \land y = 0) \lor (x' = y \land x^2 + y^2 = 1  \land y'=-x)}]P$
    \LeftLabel{\(\R\), AR}
    \UI$\fCenter\vdash [\daesys{x,y}{x' = y \land x^2 + y^2 = 1 \land (y' = -x \lor y = 0)}]P$
    \LeftLabel{\(\R\), AR}
    \UI$\fCenter\vdash [\daesys{x,y}{x' = y \land x^2 + y^2 = 1 \land (x + y')y = 0}]P$
  \end{prooftree}
  
  We can further decompose this system using the GS axiom. First, we define:
  \begin{align*}
    F(x,y) &\equiv x' = 0 \land x^2=1 \land y = 0, \\
    G(x,y) &\equiv x' = y \land x^2 + y^2 = 1 \land y'=-x.
  \end{align*}
  Here, $F(x,y)$ represents the special case where \(x\) is at its extrema (\(\pm1\)) and remains constant, while $G(x,y)$ describes the general evolution of the system along the unit circle.  
  Now, applying GS we have:
  \begin{prooftree}
    \AX$\fCenter \vdash [\{\daesys{x,y}{F(x,y) \lor G(x,y)}\}](\consis{F(x,y)}{G(x,y)} \land \consis{G(x,y)}{F(x,y)})$
    \noLine
    \UI$\fCenter \vdash [\{\daesys{x,y}{F(x,y)} \cup \daesys{x,y}{G(x,y)}\}^*]P$
    \LeftLabel{GS}
    \UI$\fCenter \vdash [\daesys{x,y}{(x' = 0 \land x^2=1 \land y = 0) \lor (x' = y \land x^2 + y^2 = 1 \land y'=-x)}]P$
  \end{prooftree}
  In \rref{sec:examples}, we show a proof of the top goal by reasoning about progress exit and entry formulas.
  The second goal involves a simplified version of the original system, formulated as a hybrid program consisting of two ODE systems derived from the original DAP.
  This representation clarifies how singularities in the differentiated constraints give rise to distinct modes of system evolution, effectively capturing the different dynamic behaviors that emerge under these conditions.
  To further illustrate these dynamics, \rref{fig:sin_cos} provides a visualization of the system in phase space, showcasing its trajectory and behavior over time.  
  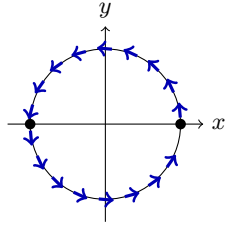
\begin{figure}
    \centering
    \begin{tikzpicture}
      \draw[->] (-1.3,0) -- (1.3,0) node[right] {$x$};
      \draw[->] (0,-1.3) -- (0,1.3) node[above] {$y$};
      
      \draw (0,0) circle (1cm);
      
      \fill[black] (1,0) circle (2pt);
      \node[anchor=west] at (1,0) {};
      
      \fill[black] (-1,0) circle (2pt);
      \node[anchor=east] at (-1,0) {};
      
      \foreach \angle in {5,25,...,335} {
        \draw[->, blue!70!black, very thick] 
             ({cos(\angle)},{sin(\angle)}) 
             -- ++({-sin(\angle)*0.2},{cos(\angle)*0.2});
      }
    \end{tikzpicture}
    \caption{Phase portrait on the unit circle with two stationary modes.}
    \label{fig:sin_cos}
  \end{figure}
\end{example}
\begin{remark}  
  The pattern observed in the previous example can be formulated in a more general way:
  Given a semi-explicit system with of the form  
  \[  
  \daesys{\tuple{x}}{\tuple{x}' = f(\tuple{x}, \tuple{y}) \land g(\tuple{x},\tuple{y}) = 0},
  \]  
  we can transform it using \(\land\)DE into the equivalent system:  
  \[  
  \daesys{\tuple{x}}{\tuple{x}' = f(\tuple{x}, \tuple{y}) \land g(\tuple{x},\tuple{y}) = 0 \land \partialDeriv{x}{g}(\tuple{x},\tuple{y}) f(\tuple{x}) + \partialDeriv{y}{g}(\tuple{x},\tuple{y}) y' = 0}.  
  \]  
  This transformation introduces an implicit constraint on \( \tuple{y}' \), but solving for \( \tuple{y}' \) is only possible if \( \partialDeriv{y}{g}(\tuple{x},\tuple{y}) \) is non-singular. 
  When this condition fails, the system encounters a singularity, highlighting necessity of the GS axiom to distinguish between different cases.
\end{remark}

\section{Related Work}

Differential dynamic logic \dL has established the logical foundations for reasoning about hybrid systems \cite{DBLP:conf/lics/Platzer12a,DBLP:journals/jar/Platzer17}.
However, \dL does not support reasoning about differential-algebraic equations, which are natural for modeling many physical and engineered processes. Building on this gap, an extension of \dL to differential-algebraic equation systems was proposed with a sequent calculus \cite{DBLP:journals/logcom/Platzer10}.
The correct implementation of this approach is hindered by its necessity to globally transform differential-algebraic equations into a specific normal form.
A challenge in reasoning about differential-algebraic systems is understanding the conditions under which systems can be decomposed correctly into different modes for a separate analysis.
Here we exploit entry and exit formulas, which have been used for a complete axiomatization of differential equation invariance \cite{DBLP:journals/jacm/PlatzerT20} and for the characterization of exit sets as a means to decide differential invariants \cite{ghorbal_characterizing_2022}.
Further contributions by Mattsson \cite{mattsson_index_1993}, Pantelides \cite{pantelides_consistent_1988}, and Benveniste \cite{benveniste_structural_2017} have addressed the challenges inherent in simulating differential-algebraic systems.
These works concentrate on the transformation of differential-algebraic equations to ensure consistent initialization.
Although effective for their intended purpose, these approaches are largely confined to the realm of model transformation and simulation.
Other works, such as \cite{cardelli_symbolic_2016,cardelli_erode_2017}, focus on the minimization of polynomial ODE systems using bisimulation techniques.
In contrast, \dAL is strictly more expressive, as it enables the specification and verification of dynamic properties such as safety and liveness.

\section{Conclusion}

This paper introduced a sound Hilbert-style proof calculus for differential-algebraic dynamic logic \dAL, enabling a firm logical foundation for reasoning about differential-algebraic equations.
This allows more flexible specification of continuous-time models, along with a framework for their rigorous transformations to obtain logically decidable invariants.
The Euclidean pendulum example showed that \dAL can specify systems with undetermined dynamics and reduce them to ordinary differential programs.
Future work will be to build on the sound logical reductions that \dAL provides for automatic model transformation.

\section*{Acknowledgements} This work was funded by an Alexander von Humboldt Professorship.
\renewcommand{\doi}[1]{doi: \href{https://doi.org/#1}{\nolinkurl{#1}}}
\bibliographystyle{splncs04}
\bibliography{dae,platzer}
\appendix
\section{Additional proof rules and axioms}
In this section, we present additional proof rules omitted from the main discussion. Since the definition of the semantics of the discrete fragment of \dL remains unchanged, the corresponding axioms and proof rules from prior work \cite{DBLP:conf/lics/Platzer12b} continue to apply.
\begin{theorem}
  The following are sound axioms and proof rules of \dAL:
  \begin{mathpar}
    \gray{\left[\cup\right]} \; \blue{[\alpha \cup \beta]P} \leftrightarrow [\alpha]P \land [\beta]P
    \quad
    \gray{\text{K}} \; \blue{[\alpha](R \rightarrow P)} \rightarrow ([\alpha]R
    \rightarrow [\alpha]P)
    \and
    \gray{\forall i} \; \forall x p(x) \rightarrow p(f)
    \and
    \AXC{\(\Gamma_0 \vdash P\)}
    \LeftLabel{\gray{G}}
    \RightLabel{\(\gray{\FV{\Gamma_0}\cap\BV{\alpha} = \varnothing}\)}
    \UIC{\(\Gamma \vdash [\alpha]P\)}
    \DisplayProof
    \and
    [;] \;\blue{[\alpha;\beta]P }\leftrightarrow [\alpha][\beta]P
    \quad
    [:=] \; \blue{[x:=e]P(x)} \leftrightarrow P(e)\\

    \gray{c'} \; \blue{(c)'} = 0 \\
    \gray{x'} \;\blue{(x)'} = x' \\
    \gray{+'} \;\blue{(e + k)'} = (e)' + (k)' \\
    \gray{\cdot'} \;\blue{(e \cdot k)'} = (e)'\cdot k + e \cdot (k)' \\
    \gray{\circ'} \;[y:=g(x)][y':=1](\blue{f(g(x))'} = (f(y))'\cdot (g(x))')
  \end{mathpar}
\end{theorem}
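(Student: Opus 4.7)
The theorem bundles three rather different kinds of facts, and the plan is to prove each group by a separate argument and then collect them. First I would split the listed items into the discrete and first-order family ($[\cup]$, K, $\forall$i, G, $[;]$, $[:=]$) and the differential term family (c$'$, x$'$, $+'$, $\cdot'$, $\circ'$). For the discrete and first-order axioms I would appeal directly to the opening sentence of the appendix: the semantics of $\alpha\cup\beta$, $\alpha;\beta$, $x:=e$, and of quantifiers and modalities on unrelated formulas are identical to those in the uniform-substitution \dL calculus, so the classical soundness arguments from \cite{DBLP:conf/lics/Platzer12b,DBLP:journals/jar/Platzer17} transfer verbatim. Concretely, $[\cup]$ and $[;]$ unfold the corresponding clauses of \rref{def:pro}; $[:=]$ unfolds assignment and substitution into $\sem{P(\cdot)}$; K is the standard modal-logic argument on $\sem{\alpha}$; $\forall$i is the Tarskian instance; and G is sound because the side condition $\FV{\Gamma_0}\cap\BV{\alpha}=\varnothing$ ensures the premise stays true under every $\alpha$-transition.

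For the differential term axioms c$'$, x$'$, $+'$, and $\cdot'$, the plan is a direct unfolding of \rref{def:terms}. I would fix an arbitrary state $\omega$ and expand $\omega\sem{(e)'}=\sum_{z\in\variables}\omega(z')\,\partialDeriv{z}{\omega\sem{e}}$. For c$'$ the partial derivative of a constant is zero, so every summand vanishes. For x$'$ only the summand at $z=x$ survives and contributes $\omega(x')=\omega\sem{x'}$. For $+'$ and $\cdot'$, linearity of partial derivatives and the pointwise product rule commute with the finite sum, giving the required identities immediately. The sums are finite because each term has only finitely many free variables, so no convergence issues arise, and smoothness of interpreted function symbols is enough for the partial derivatives to be well defined.

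The main obstacle is the chain rule axiom $\circ'$, since the ghost assignments to both $y$ and $y'$ must line up exactly with the one-variable chain rule of calculus. My plan is to pick an arbitrary $\omega$, let $\nu$ denote the state reached after executing $y:=g(x);\,y':=1$, so that $\nu(y)=g(\omega(x))$, $\nu(y')=1$, and $\nu$ agrees with $\omega$ on every other variable. On the left, $\nu\sem{(f(g(x)))'}$ collapses to $\nu(x')\cdot f'(g(\nu(x)))\cdot g'(\nu(x))$ by the classical chain rule, because $f(g(x))$ depends syntactically only on $x$. On the right, $\nu\sem{(f(y))'}=\nu(y')\cdot f'(\nu(y))=f'(g(\nu(x)))$ by design of the assignment to $y'$, while $\nu\sem{(g(x))'}=\nu(x')\cdot g'(\nu(x))$; multiplying the two pieces recovers the left-hand side exactly. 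The delicate point is to argue that setting $y':=1$ eliminates any stray contribution from $\omega(y')$ to the outer finite sum defining $\sem{(f(y))'}$, so the symbolic definition of the differential degenerates cleanly into the classical chain-rule product and the equation holds in every state reachable by the ghosting prefix.
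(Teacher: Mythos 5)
Your proof is correct but takes a more hands-on route than the paper, which disposes of the entire theorem in a single sentence by noting that the listed axioms and rules are exactly those of differential-form \dL, whose discrete-program and term semantics \dAL inherits unchanged, and then citing the prior soundness proofs. That citation works precisely because none of the listed items mentions the continuous-evolution construct whose semantics this paper actually changes, so the old arguments transfer verbatim. You reproduce this reasoning for the discrete and first-order group, matching the paper, but then go further and re-derive the differential-term axioms $c'$, $x'$, $+'$, $\cdot'$, and $\circ'$ directly from the defining sum $\omega\sem{(e)'}=\sum_{z}\omega(z')\,\partialDeriv{z}{\omega\sem{e}}$, including the careful bookkeeping for the ghost assignments $y:=g(x)$ and $y':=1$ in the chain-rule axiom so that the outer differential collapses to the classical chain-rule product. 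Those unfoldings are the standard \dL term-level arguments and are all sound. The only small mismatch is that you appeal to smoothness of interpreted function symbols, which is slightly more permissive than the polynomial-term restriction this paper invokes elsewhere (e.g.\ in the soundness of AG), but this is inherited from \dL and does not affect correctness. The paper's brevity makes clear that nothing new is being proved here; your explicit derivations make the theorem self-contained at the cost of duplicating known material.
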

\begin{proof}
  These axioms schemata and proof rules correspond to the axiom of differential dynamic logic \cite{DBLP:conf/lics/Platzer12b}. Therefore, they are sound.
\end{proof}
In the proofs of derived axioms we make use of some additional propositional proof rules:
\begin{mathpar}
  \AXC{\(\Gamma, P \vdash Q, \Delta\)}
  \LeftLabel{{\(\rightarrow\)R}}
  \UIC{\(\Gamma \vdash P \rightarrow Q, \Delta\)}
  \DisplayProof
  \and
  \AXC{\(\Gamma \vdash P, \Delta\)}
  \AXC{\(\Gamma, Q \vdash \Delta\)}
  \LeftLabel{{\(\rightarrow\)L}}
  \BIC{\(\Gamma, P \rightarrow Q \vdash \Delta\)}
  \DisplayProof
  \and
  \AXC{\(\Gamma, P, Q \vdash \Delta\)}
  \LeftLabel{{\(\land\)L}}
  \UIC{\(\Gamma, P \land Q \vdash \Delta\)}
  \DisplayProof
  \and
  \AXC{\(\Gamma \vdash P, \Delta\)}
  \AXC{\(\Gamma \vdash Q, \Delta\)}
  \LeftLabel{{\(\land\)R}}
  \BIC{\(\Gamma \vdash P \land Q, \Delta\)}
  \DisplayProof
  \and
  \AXC{\(\Gamma \vdash P, Q, \Delta\)}
  \LeftLabel{{\(\lor\)R}}
  \UIC{\(\Gamma \vdash P \lor Q, \Delta\)}
  \DisplayProof
  \and
  \AXC{*}
  \LeftLabel{{id}}
  \UIC{\(\Gamma, P \vdash P, \Delta\)}
  \DisplayProof
  \and
  \AXC{\(\Gamma \vdash \Delta\)}
  \LeftLabel{{WL}}
  \UIC{\(\Gamma, P \vdash \Delta\)}
  \DisplayProof
  \and
  \AXC{\(\Gamma \vdash p(y), \Delta\)}
  \LeftLabel{{\(\forall\)R}}
  \UIC{\(\Gamma \vdash \forall x p(x), \Delta\)}
  \DisplayProof

  \AXC{\(\Gamma, p(y) \vdash \Delta\)}
  \LeftLabel{{\(\exists\)L}}
  \UIC{\(\Gamma, \exists x p(x) \vdash \Delta\)}
  \DisplayProof

  \AXC{\(\Gamma \vdash p(e), \Delta\)}
  \LeftLabel{{\(\exists\)R}}
  \UIC{\(\Gamma \vdash \exists x p(x), \Delta\)}
  \DisplayProof

  \AXC{\(\Gamma, P \vdash Q, \Delta\)}
  \AXC{\(\Gamma, Q \vdash P, \Delta\)}
  \LeftLabel{{\(\leftrightarrow\)}}
  \BIC{\(\Gamma \vdash P \leftrightarrow Q, \Delta\)}
  \DisplayProof

  \AXC{\(\Gamma \vdash C, \Delta\)}
  \AXC{\(\Gamma, C \vdash \Delta\)}
  \LeftLabel{{cut}}
  \BIC{\(\Gamma \vdash \Delta\)}
  \DisplayProof
\end{mathpar}

\section{Proof Calculus Soundness} \label{sec:proofs}
In this section, we present the soundness proof for the calculus in \rref{thm:soundness} and some technical lemmas.
\begin{lemma}[Coincidence for terms and formulas] \label{lem:coincidence}
  The set of free variables of terms \(\FV{e}\) and formulas
  \(\FV{\phi}\) are defined in \cite{DBLP:journals/jar/Platzer17}.
  Coincidence property:
  \begin{enumerate}
    \item If states \(\omega, \nu\) agree on the free variables of
      \(e\), then \(\omega\sem{e}=\nu\sem{e}\)
    \item If states \(\omega, \nu\) agree on the free variables of
      \(\phi\), then \(\omega\sem{\phi}\) iff \(\nu\sem{\phi}\).
  \end{enumerate}
\end{lemma}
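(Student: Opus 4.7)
The plan is to prove both parts of the claim by a single structural induction on the grammar of \dAL terms and formulas, with an auxiliary simultaneous induction covering programs. Most cases follow the standard \dL argument, so I would focus effort on the two genuinely new ingredients: the differential term $(e)'$ and the box modality $[\alpha]P$ over a differential-algebraic program.

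I would start with terms. The base cases for variables and rational constants are immediate from the definition of $\FV{\cdot}$, and the cases for function symbols and the arithmetic operations $+,\cdot$ follow from the IH applied to the subterms. The one novel case is the differential $(e)'$, whose semantics $\omega\sem{(e)'} = \sum_{x \in \variables} \omega(x')\, \partialDeriv{x}{\omega\sem{e}}$ depends both on the values of primed variables and on partial derivatives of the underlying term. Here $\FV{(e)'}$ comprises $\FV{e}$ together with $\{x' \mid x \in \FV{e}\}$. For $x \notin \FV{e}$, the partial derivative vanishes by the IH applied to variations in $x$ (the semantics of $e$ is constant along $x$). For $x \in \FV{e}$, the IH gives equality of $\omega\sem{e}$ and $\nu\sem{e}$ at all neighbouring values and hence of the partial derivatives, while agreement on $\FV{(e)'}$ supplies $\omega(x') = \nu(x')$. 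Both sums therefore match.

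For formulas, the comparison, predicate, conjunction, negation, and universal cases reduce cleanly to the IH on components. The essential new case is $[\alpha]P$, which I would handle by an auxiliary coincidence lemma for programs: if $\omega$ and $\nu$ agree on $\FV{\alpha}$ and $(\omega,\tilde\omega) \in \sem{\alpha}$, then there is $\tilde\nu$ with $(\nu,\tilde\nu) \in \sem{\alpha}$ such that $\tilde\omega$ and $\tilde\nu$ agree on $\FV{\alpha}\cup\BV{\alpha}$. The assignment, test, choice, sequence, and loop cases are inherited from \dL. The critical case is the DAP $\alpha \equiv \daesys{\tuple{x}}{F}$, where $\BV{\alpha} = \tuple{x}\cup\tuple{x}'$. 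Given a witnessing flow $\varphi \in \analyticFlows{\tuple{x}}{-\epsilon,T+\epsilon}$, I would define $\psi(t)$ to coincide with $\varphi(t)$ on $\tuple{x}\cup\tuple{x}'$ and with $\nu$ elsewhere. Real-analyticity in $\tuple{x}$ is inherited from $\varphi$, and constant behaviour off $\tuple{x}\cup\tuple{x}'$ satisfies the second property of the flow definition trivially. The derivative condition constrains only variables in $\tuple{x}\cup\tuple{x}'$ and hence transfers from $\varphi$. Applying the formula-coincidence IH to $F$ along $\varphi(t)$ and $\psi(t)$ — which agree on $\FV{F}$ because free variables of $F$ outside $\tuple{x}\cup\tuple{x}'$ are subsumed by $\FV{\alpha}$ — gives $\psi([0,T]) \subseteq \sem{F}$, so $\psi$ witnesses $(\nu,\tilde\nu)\in\sem{\alpha}$ with $\tilde\nu := \psi(T)$.

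The main obstacle is precisely this DAP case. Verifying analyticity and the derivative condition of the patched flow is automatic because those requirements touch only the bound block $\tuple{x}\cup\tuple{x}'$; the subtle point is chaining the IH for $F$ with the free-variable bookkeeping — specifically that $\FV{F}\setminus(\tuple{x}\cup\tuple{x}')\subseteq\FV{\alpha}$ — to conclude $\psi([0,T]) \subseteq \sem{F}$. Once that link is secured, the remainder of the proof is routine propagation of the IH through the grammar, and part~(2) for $[\alpha]P$ follows from the program coincidence lemma applied to each pair $(\omega,\tilde\omega)$ in $\sem{\alpha}$ together with the IH on $P$.
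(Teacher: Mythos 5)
The paper does not actually give a proof of this lemma: it states it and defers to \cite{DBLP:journals/jar/Platzer17} for the free-variable definitions, implicitly treating coincidence as inherited from the \dL\ development. Your proposal supplies the inductive argument that is missing, and the route is the right one: the only genuinely new cases are the differential \((e)'\) and the DAP modality, and your flow-patching construction for \(\daesys{\tuple{x}}{F}\) (copy \(\varphi\) on \(\tuple{x}\cup\tuple{x}'\), freeze \(\nu\) elsewhere, verify the two flow conditions, then apply the IH on \(F\)) is exactly what is needed. This is actually more than a routine re-derivation, because the DAP semantics in this paper --- real-analytic flows, an arbitrary formula constraint \(F\) with \(\tuple{x}'\) occurrences, and \(\omega=\varphi(0)\) holding exactly --- differ from the \dL\ ODE semantics, so the prior coincidence result cannot simply be cited unchanged.

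Two points should be tightened. First, your auxiliary program-coincidence lemma only promises that \(\tilde\omega\) and \(\tilde\nu\) agree on \(\FV{\alpha}\cup\BV{\alpha}\); to invoke the IH on \(P\) in the \([\alpha]P\) case you need the standard stronger form: if \(\omega,\nu\) agree on any \(V\supseteq\FV{\alpha}\), then \(\tilde\omega,\tilde\nu\) agree on \(V\cup\BV{\alpha}\). Your construction already delivers this (off \(\BV{\alpha}\) the endpoints are just \(\omega\) and \(\nu\) unchanged), so only the statement needs repair. Second, the step \(\psi(0)=\nu\) tacitly uses that \(\omega\) and \(\nu\) agree on \(\tuple{x}\cup\tuple{x}'\), i.e.\ that \(\tuple{x}\cup\tuple{x}'\subseteq\FV{\daesys{\tuple{x}}{F}}\). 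This does hold under the paper's DAP semantics, since \(\omega=\varphi(0)\) fixes both initial position \emph{and} initial slope, but it is a deviation from the static semantics in \cite{DBLP:journals/jar/Platzer17}, where \(\tuple{x}'\) is bound but not free in an ODE (because the initial \(\tuple{x}'\) value there is overwritten). That difference should be stated explicitly, since the paper nominally refers to the old definitions while using a stricter transition relation.
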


\begin{lemma}[Properties of Real-Analytic Functions] \label{lem:analytic}
  Let \( f \in C^{\omega}(U) \) and \( g \in C^{\omega}(V) \) be real-analtyic functions, where \( U, V \subseteq \mathbb{R}^n \) are open sets.
  Then the following hold:
  \begin{enumerate}
    \item \( f + g \in C^{\omega}(U \cap V) \),
    \item \( f \cdot g \in C^{\omega}(U \cap V) \),
    \item For any multi-index \( \gamma \), \(\frac{\partial^\gamma
      f}{\partial x^\gamma} \in C^{\omega}(U),\)
    \item If \( g: V \to U \) is a real-analytic function between
      open sets \( V \subseteq \mathbb{R}^m \) and \( U \subseteq
      \mathbb{R}^n \), with \( f: U \to \mathbb{R} \), then the
      composition \( f \circ g \) is in \( C^{\omega}(V) \).
  \end{enumerate}
\end{lemma}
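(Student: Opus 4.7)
The plan is to establish each clause by exhibiting, in a neighbourhood of every base point, an absolutely convergent multivariate power series representation of the claimed function. The paper defines real-analyticity only in one variable, but the standard multi-index extension to open subsets of $\mathbb{R}^n$ is implicit, and all convergence facts used below are classical and can be cited from \cite[Ch.~1]{krantz_primer_2002}.

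For items 1 and 2, I would fix $x_0 \in U \cap V$ and expand $f$ and $g$ as power series around $x_0$ on a common polydisk contained in the intersection of their polydisks of convergence. Term-wise addition of the two absolutely convergent series yields an absolutely convergent power series for $f+g$, and the Cauchy product yields one for $f \cdot g$; in both cases the value of the new series coincides with $(f+g)(x)$ or $(f\cdot g)(x)$ inside the common polydisk. For item 3, I would proceed by induction on $|\gamma|$: inside its polydisk of convergence, a multivariate power series may be differentiated term by term to produce another absolutely convergent power series with the same polydisk, so each partial of a real-analytic function is again real-analytic on all of $U$.

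Item 4 is the main obstacle, as it requires substituting one power series into another while preserving convergence. I would fix $x_0 \in V$, set $y_0 = g(x_0) \in U$, and expand $g$ as a power series around $x_0$ with constant term $y_0$ and $f$ as a power series around $y_0$. Substituting the series for $g(x) - y_0$ formally into the series for $f$ produces a formal power series in $x - x_0$. To justify convergence, I would shrink the polydisk around $x_0$ until $g(x)$ stays strictly inside the polydisk of convergence of $f$ around $y_0$; a standard majorant argument then bounds the rearranged double series term-wise by a geometric series, giving absolute convergence and permitting the rearrangement that identifies the formal series with $f(g(x))$. Since all four statements are classical, the final proof would cite \cite[Ch.~1]{krantz_primer_2002} rather than carry out the bookkeeping in full.
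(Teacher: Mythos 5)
Your proposal is correct and ultimately takes the same route the paper does: the paper's entire proof is a citation of Krantz (Propositions 1.1.4, 1.6.3, and 1.6.7 of \cite{krantz_primer_2002} for items 1--2, 3, and 4 respectively), and you likewise defer to Krantz after sketching the underlying power-series and majorant arguments. The only difference is that the paper pins down the exact propositions while you cite Chapter 1 generically; your sketch of the content is accurate and adds explanatory detail the paper omits.
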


\begin{proof}
  The proofs of 1 and 2 can be found in \cite[Proposition 1.1.4]{krantz_primer_2002}, that of 3 in \cite[Proposition 1.6.3]{krantz_primer_2002}, and that of 4 in \cite[Proposition 1.6.7]{krantz_primer_2002}.
\end{proof}
\begin{theorem}(Identity Theorem \cite[Cor.~1.2.6 p.~14]{krantz_primer_2002})
  \label{thm:identity_theorem}
  Let \(U\) be an open interval and let \(f,g \in C^{\omega}(U)\) be real analytic functions. Then, if
  there is a strictly monotone sequence \((x_k)_{k\in\N}\) in \(U\) with
  \(\lim_{n \rightarrow \infty}x_n \in U\) such that
  \begin{equation*}
    f(x_n) = g(x_n), \text{ for } n = 1,2,\dots
  \end{equation*}
  then
  \begin{equation*}
    f(x) = g(x) \text{ for all } x \in U.
  \end{equation*}
\end{theorem}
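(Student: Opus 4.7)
The plan is to reduce the statement to showing that $h := f - g$ vanishes identically on $U$, which is clean since $h \in C^{\omega}(U)$ by \rref{lem:analytic} and $h(x_n) = 0$ for every $n$. The argument splits naturally into two stages: a \emph{local step} establishing $h \equiv 0$ in a neighborhood of the accumulation point $x_{*} := \lim_{n} x_n \in U$, and a \emph{global step} propagating this vanishing throughout the connected interval $U$.

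For the local step, I would expand $h$ in its convergent Taylor series $h(x) = \sum_{k \geq 0} a_k (x - x_{*})^k$ on some $(x_{*} - \delta, x_{*} + \delta) \subseteq U$ and argue that every coefficient is zero. Suppose for contradiction that $m$ is the smallest index with $a_m \neq 0$. Then $h(x) = (x - x_{*})^m \tilde h(x)$ on the neighborhood, with $\tilde h$ continuous and $\tilde h(x_{*}) = a_m \neq 0$. Strict monotonicity of $(x_k)$ forces $x_n \neq x_{*}$ for all $n$, while continuity of $\tilde h$ at $x_{*}$ combined with $x_n \to x_{*}$ gives $\tilde h(x_n) \neq 0$ for all sufficiently large $n$. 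This yields $h(x_n) \neq 0$, contradicting the hypothesis. So all $a_k = 0$ and $h \equiv 0$ on $(x_{*} - \delta, x_{*} + \delta)$.

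The main obstacle, and the heart of the proof, is the global step, which I would handle by a standard clopen argument exploiting connectedness of $U$. Let $A := \{x \in U : h \text{ vanishes on some open neighborhood of } x\}$. By construction $A$ is open, and the local step places $x_{*} \in A$, so $A$ is nonempty. To show $A$ is closed in $U$, pick $y \in \bar A \cap U$ and a sequence $y_j \in A$ with $y_j \to y$. Because $h$ vanishes identically on each open neighborhood witnessing $y_j \in A$, every derivative $h^{(k)}(y_j) = 0$; and each $h^{(k)}$ is itself real-analytic, hence continuous, by \rref{lem:analytic}, so $h^{(k)}(y) = 0$ for every $k$. Consequently the Taylor series of $h$ at $y$ is identically zero, and since $h$ coincides with its Taylor series on a neighborhood of $y$, that neighborhood witnesses $y \in A$. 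Thus $A$ is a nonempty clopen subset of the connected open interval $U$, forcing $A = U$ and $f = g$ throughout $U$.
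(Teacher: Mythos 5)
The paper does not prove this statement: it cites it directly as a textbook result from Krantz and Parks (Cor.~1.2.6, p.~14 of \cite{krantz_primer_2002}), so there is no paper proof to compare against. Your blind proof is nonetheless a correct, self-contained argument and is essentially the classical one. The local step (expanding $h = f-g$ in a power series at the accumulation point $x_* = \lim_n x_n \in U$, factoring out $(x-x_*)^m$ at the first nonvanishing coefficient, and using continuity of the cofactor together with $x_n \neq x_*$ from strict monotonicity to derive a contradiction) is exactly the standard way to get local vanishing from a convergent zero sequence. The global step via the clopen set $A = \{x \in U : h \equiv 0 \text{ near } x\}$ is sound: openness is immediate, nonemptiness comes from the local step, and closedness follows because all derivatives $h^{(k)}$ are continuous (they are real-analytic by \rref{lem:analytic}) and vanish at any limit of points of $A$, so the Taylor expansion at that limit is zero and real-analyticity propagates the vanishing to a neighborhood. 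Connectedness of the interval $U$ then forces $A = U$. One small stylistic point: the hypothesis that $(x_k)$ is strictly monotone is used only to guarantee $x_n \neq x_*$ (so that the factored form $h(x_n) = (x_n - x_*)^m \tilde h(x_n)$ actually gives a nonzero value); you used it for exactly that, which is the right place, and the rest of the argument does not need monotonicity at all.
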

\begin{lemma} \label{lem:zeros}
  Let \(U \subseteq \R\) be an open interval and \(f \in C^{\omega}(U)\) be a real-analytic function.
  If \(f\) is non-constant, then \(f\) has at most finitely many zeros in each closed interval \([a,b] \subset U\).
\end{lemma}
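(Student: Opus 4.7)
The plan is to argue by contradiction using the Identity Theorem (\rref{thm:identity_theorem}) that is already stated in the excerpt, with the zero function as the real-analytic comparison function $g \equiv 0$.

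First I would assume, for contradiction, that $f$ has infinitely many zeros in the closed interval $[a,b]$. Since $[a,b]$ is compact, the Bolzano–Weierstrass theorem supplies a sequence of distinct zeros $(x_n)_{n \in \mathbb{N}} \subseteq [a,b]$ that converges to some limit $x^* \in [a,b]$. Because $[a,b] \subset U$ and $U$ is open, the limit $x^*$ lies in $U$ as well, which is the key inclusion needed to invoke the Identity Theorem.

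Next I would pass to a strictly monotone subsequence. Since the $x_n$ are distinct and converge to $x^*$, infinitely many of them lie on one side of $x^*$ (strictly above or strictly below); extracting that subsequence and, if necessary, thinning further yields a strictly monotone sequence $(x_{n_k})$ in $U$ with $\lim_{k \to \infty} x_{n_k} = x^* \in U$ and $f(x_{n_k}) = 0$ for all $k$. Applying \rref{thm:identity_theorem} to $f$ and the constant zero function (which is trivially real-analytic on $U$) on the open interval $U$ gives $f(x) = 0$ for all $x \in U$. This contradicts the hypothesis that $f$ is non-constant, completing the proof.

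The argument is almost entirely bookkeeping; the only mildly delicate step is checking that the limit point $x^*$ of the zero set lies in the \emph{open} interval $U$ (not merely in $[a,b]$), which is where the assumption $[a,b] \subset U$ (proper containment, so that $[a,b]$ stays away from any boundary of $U$ that might exist) is used. Extracting a strictly monotone subsequence from a convergent sequence of distinct reals is routine but must be done explicitly, because the Identity Theorem as quoted requires strict monotonicity rather than mere accumulation.
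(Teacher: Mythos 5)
Your proposal is correct and follows essentially the same route as the paper's proof: both argue by contradiction, extract a strictly monotone sequence of zeros converging to a point of $[a,b]\subset U$, and invoke the Identity Theorem (\rref{thm:identity_theorem}) with $g\equiv 0$ to conclude $f\equiv 0$, contradicting non-constancy. The only cosmetic difference is how the monotone sequence is produced — the paper splits $[a,b]$ at a chosen zero and takes the side with infinitely many, whereas you appeal to Bolzano--Weierstrass and then thin to a monotone subsequence; you are also slightly more explicit that the limit point lies in the open set $U$, which the paper leaves implicit via $[a,b]\subset U$.
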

\begin{proof}
  Let \(f \in C^{\omega}(U)\) be a non-constant, real-analytic function. Define the zero set \(Z = \{t \in [a,b] \vl f(t) = 0\}\). Assume for a contradiction, that \(Z\) contains infinitely many elements. Fix any element \(t_1 \in Z\).
  Then, either \([a,t_1] \cap Z\) or \([t_1,b] \cap Z\) has an infinite set of elements. Without loss of generality, construct a strictly increasing sequence
  \((t_k)_{k \in \N}\) of elements of \([t_1,b] \cap Z\).
  By the monotone convergence theorem, this sequence has a limit \(\lim_{k
  \rightarrow \infty} t_k = t^*\), which is contained in \(t^* \in
  [a,b]\), because \([a,b]\) is a compact set.
  In summary, we have \(f(t_k) = 0 \) for \(k \in \N\) and \(
  \lim_{k \rightarrow \infty}t_k \in [0,T]\). By \rref{thm:identity_theorem}, this implies \(f(t) = 0\) for all \(t \in U\). This contradicts, the assumption that \(f\) is a non-constant function.
\end{proof}
\begin{remark} \label{rem:inf}
  Define the set \(\mathcal{T} = \{s \vl \varphi(s) \in \sem{F}\} \neq \varnothing\).
  For the infimum \(t = \inf \mathcal{T}\) we have \(t \leq s\) for all \(s \in [0,T]\) with \(\varphi(s) \in \sem{F}\).
  For all \(\tau \in (t,T]\) with \(\varphi(\tau) \in \sem{F}\) there exists an \(s \in [t,\tau)\) such that \(\varphi(s) \in \sem{F}\).
  If \(\varphi(t) \in \sem{\neg F}\), then there exists a \(T-t\geq \delta_0 > 0\) such that \(\varphi((t,t+\delta_0)) \subseteq \sem{F}\) by \rref{cor:zeros}.
\end{remark}
\begin{lemma} \label{lem:local_progress}
  We have \(\omega \in \sem{\progress{\daesys{\tuple{x}}{F}} \lor (\tuple{x}'=0 \land F)}\), if and only if, there
  exist \(\epsilon > 0, T > 0\) and a flow \(\varphi \in \analyticFlowsInterval\) such that \(\varphi((0,T]) \subseteq \sem{F}\) and \(\omega = \varphi(0)\).
\end{lemma}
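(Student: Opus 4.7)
The plan is to prove both implications by splitting on whether the relevant flow is stationary in $\tuple{x}$, with \rref{lem:zeros} as the main analytic tool.

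For the backward direction I assume a flow $\varphi$ as on the right-hand side and case on whether $\varphi(t)(\tuple{x}) = \omega(\tuple{x})$ holds throughout $[0, T]$. In the stationary case, property 1 of \rref{def:flow} forces $\varphi(t)(\tuple{x}') = 0$, hence $\omega(\tuple{x}') = 0$, while property 2 makes $\varphi$ identically equal to $\omega$, so $\omega \in \sem{F}$ follows from $\varphi(T) \in \sem{F}$ using $T > 0$, yielding the disjunct $\tuple{x}' = 0 \land F$. Otherwise I pick $t^* \in (0, T]$ with $\varphi(t^*)(\tuple{x}) \neq \omega(\tuple{x})$ and instantiate the existential witnesses as $\tuple{y} \mapsto \omega(\tuple{x})$, $\tuple{y}' \mapsto \omega(\tuple{x}')$; then $\varphi$, trivially extended to keep the fresh variables $\tuple{y}, \tuple{y}'$ constant, witnesses the diamond modality: the second disjunct $\tuple{x} = \tuple{y} \land \tuple{x}' = \tuple{y}'$ holds at $t = 0$, $F$ holds on $(0, t^*]$, and $\tuple{x} \neq \tuple{y}$ is reached at $t^*$.

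For the forward direction, the $\tuple{x}' = 0 \land F$ disjunct is immediate: the constant map $\varphi(t) := \omega$ meets both properties of \rref{def:flow} because $\omega(\tuple{x}') = 0$. The substantive case is the progress disjunct. Unfolding the conjunction $\tuple{x} = \tuple{y} \land \tuple{x}' = \tuple{y}'$ forces the existential witnesses to be $\omega(\tuple{x}), \omega(\tuple{x}')$, and unfolding the diamond provides a flow $\varphi$ on some interval $(-\epsilon, T + \epsilon)$ starting at the state obtained from $\omega$ by setting $\tuple{y} := \omega(\tuple{x}), \tuple{y}' := \omega(\tuple{x}')$, satisfying the disjunction on $[0, T]$, and reaching $\varphi(T)(\tuple{x}) \neq \omega(\tuple{x})$.

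The hard part is then to carve out an initial interval $(0, \delta]$ on which the $F$-disjunct, rather than $\tuple{x} = \tuple{y} \land \tuple{x}' = \tuple{y}'$, is forced along $\varphi$. I would pick a component $x_i$ with $\varphi(T)(x_i) \neq \omega(x_i)$ and apply \rref{lem:zeros} to the real-analytic function $f_i(t) := \varphi(t)(x_i) - \omega(x_i)$: since $f_i$ is not identically zero, its zero at $t = 0$ is isolated, so there is $\delta > 0$ with $f_i(t) \neq 0$ on $(0, \delta]$. On this punctured neighbourhood the disjunct $\tuple{x} = \tuple{y}$ already fails, so $F$ must hold. Finally, resetting the fresh variables $\tuple{y}, \tuple{y}'$ in $\varphi$ back to $\omega(\tuple{y}), \omega(\tuple{y}')$ preserves $F$ along the flow by \rref{lem:coincidence} and adjusts the starting state to $\omega$, yielding the required flow on an open neighbourhood of $[0, \delta]$.
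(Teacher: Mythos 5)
Your proof is correct and relies on the same core analytic fact as the paper's: because the flow is real-analytic, the equality $\tuple{x}=\tuple{y}$ (which pins the flow to its initial $\tuple{x}$-value) can hold only on a discrete set of times near $0$, forcing the $F$-disjunct on a punctured interval $(0,\delta]$. The packaging differs in both directions, though. In the forward direction you isolate a single component $x_i$ with $\varphi(T)(x_i)\neq\omega(x_i)$ and apply \rref{lem:zeros} to $f_i(t)=\varphi(t)(x_i)-\omega(x_i)$ directly, whereas the paper takes the infimum $t=\inf\{s \mid \varphi(s)\in\sem{\tuple{x}\neq\tuple{y}}\}$, refutes $t>0$ via the identity theorem, and handles $t=0$ via \rref{rem:inf}; your route is shorter but rests on the same underlying fact. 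In the backward direction you dichotomize on whether $\varphi(\cdot)(\tuple{x})$ is constant on $[0,T]$, while the paper instead case-splits on the state $\omega$ (whether $\omega(\tuple{x}')=0$ and whether $\omega\in\sem{F}$), yielding three cases. Your dichotomy is cleaner and arguably more robust: the stationary case derives $\omega(\tuple{x}')=0$ and $\omega\in\sem{F}$ from the flow itself, and the non-stationary case supplies the diamond witness directly by restricting $\varphi$ to $[0,t^*]$. In particular it sidesteps the paper's separate treatment of the configuration $\omega\in\sem{\neg F\land\tuple{x}'=0}$, which is delicate because an analytic flow with $\varphi(0)(\tuple{x}')=0$ can nonetheless move $\tuple{x}$ for $t>0$, so one cannot conclude $\varphi(0)(\tuple{x})=\varphi(T)(\tuple{x})$ in that case.
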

\begin{proof}[of \rref{lem:local_progress}]
  Fix an arbitrary state with \begin{align*}
    \omega &\in \sem{\progress{\daesys{\tuple{x}}{F}}} \\
    &= \sem{\exists \tuple{y} \exists \tuple{y}' (\tuple{x} = \tuple{y} \land \tuple{x}' = \tuple{y}' \land \dia{\daesys{\tuple{x}}{F \lor (\tuple{x} = \tuple{y} \land \tuple{x}' = \tuple{y}')}}\tuple{x} \neq \tuple{y})}.
  \end{align*}
  Then, there exists a state \(\nu \in \states\) with \(\nu(\tuple{x}) = \nu(\tuple{y}) = \omega(\tuple{x})\) and \(\nu(\tuple{x}') = \nu(\tuple{y}') = \omega(\tuple{x}')\) and \(\nu \in \sem{\dia{\daesys{\tuple{x}}{F \lor (\tuple{x} = \tuple{y} \land \tuple{x}' = \tuple{y}')}}\tuple{x} \neq \tuple{y}}\).
  This implies that there exist \(T \geq 0, \epsilon > 0\) and a flow \(\varphi \in \analyticFlowsInterval\) with \(\varphi(0) = \nu\) such that \(\varphi([0,T]) \subseteq \sem{F \lor (\tuple{x} = \tuple{y} \land \tuple{x}' = \tuple{y}')}\) and \(\varphi(T) \in \sem{\tuple{x}\neq \tuple{y}}\).
  This implies \(\varphi(0) \in \sem{F \lor \tuple{x} = \tuple{y} \land \tuple{x}'=\tuple{z}}\). Assume for a contradiction that \(T = 0\).
  This implies \(\varphi(0)(\tuple{x}) \neq \varphi(0)(\tuple{y})\), which contradicts \(\nu(\tuple{x}) = \nu(\tuple{y})\).
  Therefore, we may assume \(T > 0\) in the following.

  Define \(t = \inf\{s \vl \varphi(s) \in \sem{\tuple{x} \neq \tuple{y}}\}\), which is well-defined by \(\varphi(T) \in \sem{\tuple{x} \neq \tuple{y}}\).
  We consider two cases for the infimum:
  \begin{enumerate}
    \item If \(t > 0\), then \(\varphi((0,t)) \subseteq \sem{\tuple{x}=\tuple{y}}\).
    This implies \(\varphi(\tau)(\tuple{x}) = \varphi(\tau)(\tuple{y})\) for all \(\tau \in (0, t)\).
    By \cite[Corollary 1.2.5]{krantz_primer_2002}, this implies \(\varphi(\tau)(\tuple{x}) = \varphi(\tau)(\tuple{y})\) on \((-\epsilon, T + \epsilon)\) contradicting \(\varphi(T)(\tuple{x}) \neq \varphi(T)(\tuple{y})\).
    \item If \(t = 0\), then by \(\varphi(0) \in \sem{\tuple{x} = \tuple{y}}\) and \rref{rem:inf},  there exists a \(\delta_0 >0 \) such that \(\varphi((0,\delta_0)) \subseteq \sem{\tuple{x}\neq \tuple{y}}\).
    This implies \(\varphi((0,\delta_0)) \subseteq \sem{F}\). In particular, for any \(0 < S < \delta_0\) we have \(\varphi((0,S]) \subseteq \sem{F}\).
  \end{enumerate}
  Now, we define a flow \(\psi \in \analyticFlows{\tuple{x}}{(-\epsilon, S + \epsilon)}\) with \(\psi(\tau)(r) = \varphi(\tau)(r)\) for all \(\tau \in (-\epsilon, S + \epsilon)\) and \(r \in \bar{x} \cup \bar{x}'\) and \(\psi(\tau)(r) = \omega\) for all \(\tau \in (-\epsilon,S+\epsilon)\) and \(r \in (\tuple{x} \cup \tuple{x}')^\complement\).
  By \rref{lem:coincidence}, because \(\tuple{y},\tuple{y}' \not\in \FV{F}\), we have \(\psi((0,S]) \subseteq \sem{F}\).
  Otherwise, we have \(\omega \in \sem{\tuple{x}'=0 \land F}\).
  In that case define \(\varphi(\tau) = \omega\) on \((\tuple{x}')^\complement\) and \(\varphi(\tau) = 0\) on \(\tuple{x}'\) for \(\tau \in [0,T]\).

  Conversely, suppose there are \(T \geq 0, \epsilon > 0\) and \(\varphi \in \analyticFlowsInterval\) with \(\varphi(0) = \omega\) and \(\varphi((0,T]) \subseteq \sem{F}\). Consider three disjoint cases:
  \begin{enumerate}
    \item If \(\omega \in \sem{\tuple{x}' = 0 \land F}\), then the statement immediately follows.
    \item If \(\omega \in \sem{\neg F \land \tuple{x}' = 0}\), we have \(\omega = \varphi(0) \in \sem{\neg F}\). By \(\varphi(0) = \omega \in \sem{\tuple{x}' = 0}\), we also have \(\varphi(0)(\tuple{x}) = \varphi(T)(\tuple{x})\). The previous two properties contradict the assumption that there exists a flow \(\varphi \in \analyticFlowsInterval\) with \(\varphi(T) \in \sem{F}\).
    \item If \(\omega \in \sem{\tuple{x}' \neq 0}\). Then, by the continuity of derivatives there exists a \(\delta > 0\) and a variable \(r' \in \tuple{x}'\) such that \(\varphi(\tau)(r') \neq 0 \) for all \(\tau \in [0,\delta)\). By applying the mean-value theorem we obtain:  \(\varphi(\delta)(r) - \varphi(0)(r) = \tau\frac{d\varphi(t)}{dt}(\xi) \neq 0\) for some \(\xi \in (0,\delta)\). Therefore, we have \(\varphi(\delta) \in \sem{\tuple{x} \neq \tuple{y}}\).
  \end{enumerate}
\end{proof}

\begin{proof}[of \rref{lem:seq}]
  For \(T \geq 0\) and \(\epsilon > 0\) let \(\varphi \in \analyticFlowsInterval\) be a flow. Because \(F\) is a semi-algebraic set,
  form, we can represent it by its disjunctive normal: \[
    F \equiv \bigvee_{i=1}^n\left(\bigwedge_{p \in
      \mathcal{P}_i} p \geq 0 \land \bigwedge_{q \in
    \mathcal{Q}_i} q > 0\right),
  \]
  where \(\mathcal{P}_i\) and
  \(\mathcal{Q}_i\) are finite sets of polynomials for \(i \in \N\). For the flow \(\varphi\) it is
  equivalent, to instead consider the reduced representation
  \[
    F_{\mathcal{I}} \equiv \bigvee_{i=1}^n\left(\bigwedge_{p \in
      \mathcal{P}_i \setminus \mathcal{I}} p \geq 0 \land
    \bigwedge_{q \in \mathcal{Q}_i\setminus \mathcal{I}} q > 0\right),
  \]
  where \(\mathcal{I} = \{p \in \Pi \vl \varphi(0)\sem{p} =
  \varphi(t)\sem{p} \text{ for all } 0 \leq t \leq T\}\) is set of
  all polynomials that remain constant along the flow. Without loss
  of generality assume that \((\tau_k)_{k \in \N}\) is strictly
  monotonically increasing. Suppose that for every \(k_0 \in \N\)
  there exists a \(t \geq \tau_{k_0}\) such that \(\varphi(t) \in
  \sem{\neg F_{\mathcal{I}}}\). We know that \(\varphi(\tau_{k_0+1})
  \in \sem{F_{\mathcal{I}}}\). By the continuity of the
  solution, the intermediate value theorem implies that exists an
  intermediate value with \(t
  \leq \xi_{k_0} \leq \tau_{k_0+1}\) and \(\varphi(\xi_{k_0})\sem{p(x,x')} =
  0\) for some \(p \in (\mathcal{P} \cup \mathcal{Q})\setminus
  \mathcal{I}\), where \(\mathcal{P} =
  \bigcup_{i=1}^n \mathcal{P}_i\) and \(\mathcal{Q} = \bigcup_{i=1}^n
  \mathcal{Q}_i\). By repeating this argument we can construct
  a strictly increasing sequence of intermediate values
  \((\xi_{k_0})_{k_0 \in \N}\) such that for every \(k_0 \in \N\)
  there exists a \(p \in (\mathcal{P} \cup \mathcal{Q})\setminus
  \mathcal{I}\) with \(\varphi(\xi_{k_0})\sem{p(x,x')} = 0\).
  Because the set of polynomials \((\mathcal{P} \cup
  \mathcal{Q})\setminus
  \mathcal{I}\) is finite, this implies that there is at least one
  element \(p \in (\mathcal{P} \cup \mathcal{Q})\setminus
  \mathcal{I}\) with an infinite set of zeros in \([0,T]\).
  The value of that polynomial along the flow is given by \(\varphi(\cdot)\sem{p(x,x')}:(-\epsilon,T+\epsilon) \rightarrow \R\) which is real-analytic on \((-\epsilon,T+\epsilon)\) by \rref{lem:analytic}. Because \(p \not\in \mathcal{I}\) is non-constant along the flow, the function \(\varphi(\cdot)\sem{p(x,x')}\) is non-constant. This contradicts \rref{cor:zeros}, because \(\varphi(\cdot)\sem{p(x,x')}\) has an infinite set of zeros in \([0,T]\).
  \qed
\end{proof}

\begin{corollary} \label{cor:zeros}
  Let \(T \geq 0, \epsilon > 0\), \(\varphi \in \analyticFlowsInterval\) be a flow and \(F\)
  be a semi-algebraic set. Then, if for all \(\delta \in (t,T]\)
  there exists a \(\tau \in (t,t+\delta)\) such that \(\varphi(\tau)
  \in \sem{F}\),
  then there exists a \(\delta_0  > 0\) such that for all \(\tau \in (t,t + \delta_0) \) we have \(\varphi(\tau) \in \sem{F}\).
\end{corollary}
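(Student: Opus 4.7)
The plan is to reduce \rref{cor:zeros} to \rref{lem:seq} by extracting from the hypothesis a strictly decreasing sequence of times $\tau_k \downarrow t$ at which the flow lies in $\sem{F}$, and then invoking the real-analytic rigidity encapsulated by that lemma.

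First I would build the sequence inductively. Using the hypothesis with a sufficiently small initial $\delta$, pick any $\tau_1 \in (t, T]$ with $\varphi(\tau_1) \in \sem{F}$. Given $\tau_k$, apply the hypothesis with $\delta = (\tau_k - t)/2$ to obtain $\tau_{k+1} \in (t, t + \delta)$ with $\varphi(\tau_{k+1}) \in \sem{F}$; by construction $t < \tau_{k+1} < (t + \tau_k)/2 < \tau_k$, so $(\tau_k)_{k \in \N}$ is strictly decreasing, converges to $t$, and is contained in $(t, T] \subseteq [0, T]$, which is exactly the precondition required to apply \rref{lem:seq}.

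Next I would apply \rref{lem:seq} to this strictly decreasing sequence and the semi-algebraic set $F$. It yields some $k_0 \in \N$ such that $\varphi(\tau) \in \sem{F}$ for every $\tau \leq \tau_{k_0}$ in the flow's domain $(-\epsilon, T + \epsilon)$. Setting $\delta_0 = \tau_{k_0} - t > 0$ then closes the argument: every $\tau \in (t, t + \delta_0)$ satisfies $\tau < \tau_{k_0}$ and lies in the domain of $\varphi$, whence $\varphi(\tau) \in \sem{F}$.

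The main difficulty is a bookkeeping one rather than a substantive logical obstacle. One must be careful that the extracted sequence really fulfils the hypotheses of \rref{lem:seq} (strict monotonicity and containment in $[0, T]$, not merely in the open-neighbourhood domain), and one must resist the temptation to assert anything about $\varphi(t)$ itself: the corollary deliberately quantifies only over the open interval $(t, t + \delta_0)$, which is compatible with $\varphi(t) \notin \sem{F}$ and is in fact needed in the applications where $t$ is an exit time.
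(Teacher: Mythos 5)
Your proposal matches the paper's own proof: both construct a strictly decreasing sequence $(\tau_k)_{k\in\N}\subseteq[0,T]$ with $\varphi(\tau_k)\in\sem{F}$ directly from the hypothesis, invoke Lemma~\ref{lem:seq} for the decreasing case to obtain $k_0$, and take $\delta_0=\tau_{k_0}-t$. You merely spell out the inductive bisection that produces the sequence, the membership check $\tau_k\in[0,T]$, and the final unwinding, all of which the paper's one-sentence proof leaves implicit.
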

\begin{proof}
  Let \(T \geq 0, \epsilon > 0\) and \(\varphi \in \analyticFlowsInterval\) be a flow. Construct a
  strictly deceasing sequence \((\tau_{k})_{k \in \N}\) such that
  \(\varphi(\tau_k) \in \sem{F}\) by \rref{lem:seq} there exists
  a \(k_0 \in \N\) such that \(\varphi(\tau) \in \sem{F}\) for all
  \(\tau \leq \tau_{k_0}\). \qed
\end{proof}
\begin{lemma} \label{lem:progress}
  Let \(T \geq 0, \epsilon > 0\), \(\varphi \in \analyticFlowsInterval\) be a flow and \(F\)
  be a semi-algebraic set. Then, if we have \(\varphi(0) \in \sem{\neg F}\) and \(\varphi(T) \in \sem{F}\), the following holds: The infimum \(t = \inf \{ \tau \vl \varphi(\tau) \in \sem{F}\}\) satisfies \(\varphi(t) \in \sem{\enter{\daesys{\tuple{x}}{F}}}\) and \(\varphi(t) \in \sem{\exit{\daesys{\tuple{x}}{\neg F}}}\).
\end{lemma}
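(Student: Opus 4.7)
The plan is to collapse the two claims into one and then case-split on the value of $F$ at $t$. Since the substitution $(\cdot)^-$ commutes with negation, unfolding the definitions shows $\enter{\daesys{\tuple{x}}{F}} \equiv \exit{\daesys{\tuple{x}}{\neg F}}$ as the single disjunction $(\progress{\daesys{\tuple{x}}{F}} \land \neg F) \lor (\progress{\daesys{\tuple{x}}{\neg F^-}}^- \land F)$, so it suffices to verify one of them. By hypothesis $\varphi(T) \in \sem{F}$ and $\varphi(0) \in \sem{\neg F}$, so $t \in [0,T]$ is well-defined, and by definition of infimum $\varphi(s) \in \sem{\neg F}$ for every $s \in [0,t)$.

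In the forward case $\varphi(t) \in \sem{\neg F}$ I would aim for the first disjunct. The shifted flow $\psi(\tau) = \varphi(t+\tau)$ is a real-analytic flow on a neighborhood of $0$ starting at $\varphi(t)$, and by the infimum property $\psi$ meets $\sem{F}$ in every right neighborhood of $0$. \rref{cor:zeros} then yields a $\delta_0 > 0$ with $\psi((0,\delta_0)) \subseteq \sem{F}$, whence \rref{lem:local_progress} gives $\varphi(t) \in \sem{\progress{\daesys{\tuple{x}}{F}} \lor (\tuple{x}'=0 \land F)}$; the second disjunct is eliminated by $\varphi(t) \in \sem{\neg F}$, establishing $\progress{\daesys{\tuple{x}}{F}} \land \neg F$ at $\varphi(t)$.

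In the reverse case $\varphi(t) \in \sem{F}$ the hypothesis $\varphi(0) \in \sem{\neg F}$ forces $t > 0$, and I aim for the second disjunct. The key step is to build the time-reversed flow $\tilde{\varphi}$ on an open neighborhood of $[0,t]$ defined by $\tilde{\varphi}(\tau)(r) = \varphi(t-\tau)(r)$ for $r \in \tuple{x}$ and $\tilde{\varphi}(\tau)(r') = -\varphi(t-\tau)(r')$ for $r' \in \tuple{x}'$, keeping all other components constant. A short calculation confirms that $\tilde{\varphi}$ is a real-analytic flow starting at the state $\omega^-$ obtained from $\varphi(t)$ by negating its $\tuple{x}'$-components. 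For $\tau \in (0,t]$ we have $\varphi(t-\tau) \in \sem{\neg F}$, which by the semantics of $(\cdot)^-$ translates to $\tilde{\varphi}(\tau) \in \sem{\neg F^-}$. \rref{lem:local_progress} then gives $\omega^- \in \sem{\progress{\daesys{\tuple{x}}{\neg F^-}} \lor (\tuple{x}'=0 \land \neg F^-)}$; the second disjunct fails because $\tuple{x}' = 0$ makes $\omega^- = \varphi(t)$, forcing $\varphi(t) \in \sem{F} \cap \sem{\neg F}$ against Case B. Therefore $\varphi(t) \in \sem{\progress{\daesys{\tuple{x}}{\neg F^-}}^-}$, yielding the second disjunct.

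The main technical obstacle is the reverse-flow step: the derivatives must be negated just so that $\tilde{\varphi}$ still satisfies the flow identity of \rref{def:flow}, the satisfaction of $\neg F$ along $\varphi$ must be faithfully transported to $\neg F^-$ along $\tilde{\varphi}$ via coincidence (\rref{lem:coincidence}) and the $[\tuple{x}':=-\tuple{x}']$ semantics, and the parasitic $\tuple{x}'=0$ disjunct supplied by \rref{lem:local_progress} must be ruled out. The forward case is essentially just \rref{cor:zeros} packaged with \rref{lem:local_progress}.
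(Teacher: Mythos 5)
Your proof is correct and follows essentially the same path as the paper's: case split on whether $\varphi(t)$ satisfies $F$, apply \rref{cor:zeros} and \rref{lem:local_progress} for the forward case, construct the sign-flipped reverse flow for the other case, and conclude by observing $\enter{\daesys{\tuple{x}}{F}} \equiv \exit{\daesys{\tuple{x}}{\neg F}}$. You are in fact slightly more careful than the paper in explicitly discharging the parasitic $\tuple{x}'=0$ disjunct that \rref{lem:local_progress} produces in both cases; the paper's argument elides this but your treatment of it is sound.
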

\begin{proof}
  Let \(T \geq 0, \epsilon > 0\) and \(\varphi \in \analyticFlowsInterval\) be a flow with \(\varphi(0) \in\sem{\neg F}\) and \(\varphi(T) \in \sem{F}\). We have \(\varphi(\tau) \in \sem{\neg F}\) for \(\tau \in [0,t)\), by the definition of the infimum. First, we show \(\varphi(t) \in \sem{\enter{\daesys{\tuple{x}}{F}}}\), by splitting into two disjoint cases:
  \begin{enumerate}
    \item Suppose \(\varphi(t) \in \sem{F}\). Then show \(\varphi(t) \in \sem{\progress{\daesys{\tuple{x}}{\neg F^-}^-}}\). Construct \(\psi(\tau)(\tuple{x}) = \varphi(t - \tau)(\tuple{x})\) and \(\psi(\tau)(\tuple{x}') = - \varphi(t - \tau)(\tuple{x}')\) for \(\tau \in [0,t]\). Then we have \(\psi((0,\tau]) \subseteq \sem{\neg F^-}\). This implies \(\psi(0) \in \sem{\progress{\daesys{\tuple{x}}{\neg F^-}}}\), which implies \(
    \varphi(t) = \psi(0)_{\tuple{x}'}^{-\psi(0)(\tuple{x}')} \in \sem{\progress{\daesys{\tuple{x}}{\neg F^-}}^-}\).
    \item Otherwise, suppose \(\varphi(t) \in \sem{\neg F}\). We show \(\varphi(t) \in \sem{\progress{\daesys{\tuple{x}}{F}}}\): If there exists a \(\delta \in (0,T]\) such that for all \(\tau \in (t,t+\delta)\) we have \(\varphi(\tau) \in \sem{F}\), the flow satisfies \(\varphi((t,t+\delta_0]) \subseteq \sem{F}\) for some \(\delta_0 < \delta\). By \rref{lem:local_progress}, this implies the statement. Now, suppose for a contradiction that for every \(\delta \in (0,T]\), there exists a \(\tau \in (t,t+\delta)\) such that \(\varphi(\tau) \in \sem{\neg F}\). By \rref{cor:zeros}, this implies that there exists a \(\delta_0 \in (t,T]\) such that for all \(\tau \in (t,t+\delta_0)\) we have \(\varphi(\tau) \in \sem{\neg F}\). Together with \(\varphi(t) \in \sem{\neg F}\) this implies for all \(\tau \in [t,t+\delta)\) we have \(\varphi(\tau) \in \sem{\neg F}\), contradicting that \(t\) is the infimum of \(\{ \tau \vl \varphi(\tau) \in \sem{F}\}\).
  \end{enumerate}
  The two cases combined imply \(\varphi(t) \in \sem{\enter{\daesys{\tuple{x}}{F}}}\). By \rref{def:progress}, we have 
  \begin{align*}
    \sem{\enter{\daesys{\tuple{x}}{F}}} &= \sem{(\progress{\daesys{\tuple{x}}{
    F}} \land \neg F) \lor (\progress{\daesys{\tuple{x}}{\neg
    F^-}}^- \land F)} \\
    &= \sem{\exit{\daesys{\tuple{x}}{\neg F}}}.
  \end{align*}
  This implies the second statement.
\end{proof}
\begin{lemma} \label{lem:mode_consistency}
  Let \(T \geq 0, \epsilon > 0 \) and \(\varphi \in \analyticFlowsInterval\) be a flow with \(\varphi(0) \in
  \sem{F}\) and \(\varphi([0,T]) \subseteq \sem{F \lor G}\). If we have \(\varphi([0,T]) \subseteq
  \sem{\consis{F}{G}}\) and there exists a \(\xi \in [0,T]\) such that
  \(\varphi(\xi) \in \sem{G}\), then there a \(\tau \in [0,T]\) such that
  \(\varphi([0,\tau]) \subseteq \sem{F}\) and \(\varphi(\tau)
  \in \sem{F \land G}\).
\end{lemma}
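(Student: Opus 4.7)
The plan is to locate the earliest moment at which the flow reaches $\sem{G}$ and argue that at that moment $F$ also holds, yielding the required witness. Concretely, I would set $\tau := \inf\{s \in [0,T] : \varphi(s) \in \sem{G}\}$, which is well-defined and lies in $[0,T]$ because the assumption on $\xi$ makes the set nonempty. By definition of the infimum, $\varphi(s) \in \sem{\neg G}$ for every $s < \tau$, and combining with $\varphi([0,T]) \subseteq \sem{F \lor G}$ would immediately give $\varphi([0,\tau)) \subseteq \sem{F}$.

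It then remains to establish $\varphi(\tau) \in \sem{F \land G}$. For the $F$-part, the case $\tau = 0$ is immediate from $\varphi(0) \in \sem{F}$; for $\tau > 0$, I would pick a strictly increasing sequence $s_k \uparrow \tau$ with $s_k \in [0,\tau)$, so $\varphi(s_k) \in \sem{F}$ for every $k$, and invoke \rref{lem:seq} to obtain $k_0$ with $\varphi(s) \in \sem{F}$ for all $s \geq s_{k_0}$, giving $\varphi(\tau) \in \sem{F}$. For the $G$-part, the characterisation of the infimum yields two sub-cases: either $\varphi(\tau) \in \sem{G}$ directly, or there is a strictly decreasing sequence $s_k \downarrow \tau$ with $s_k > \tau$ and $\varphi(s_k) \in \sem{G}$; in the second sub-case I would apply \rref{lem:seq} to this decreasing sequence to produce $k_0$ with $\varphi(s) \in \sem{G}$ for all $s \leq s_{k_0}$, and since $\tau < s_{k_0}$ this forces $\varphi(\tau) \in \sem{G}$. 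Combining both parts yields $\varphi([0,\tau]) \subseteq \sem{F}$ and $\varphi(\tau) \in \sem{F \land G}$.

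The hard part will be the limiting behaviour at $\tau$: membership in the semi-algebraic sets $\sem{F}$ and $\sem{G}$ is not automatically preserved under limits of times along the flow, so passing from sequential membership to membership at the limit point $\tau$ is nontrivial in general. This is precisely what \rref{lem:seq} handles for us, leveraging the real-analyticity of $\varphi$ together with the semi-algebraic structure of $F$ and $G$ to force an accumulation of membership points to extend to a one-sided neighbourhood of $\tau$, which in particular pins down $\varphi(\tau)$ as a member. The mode-consistency hypothesis $\varphi([0,T]) \subseteq \sem{\consis{F}{G}}$ is available as a backup at $\tau$: should one prefer a contradiction argument for $\varphi(\tau) \in \sem{G}$, the flow immediately past $\tau$ witnesses the exit formula of $F$ and (under the assumption $\varphi(\tau) \notin \sem{G}$) the entry formula of $G$ at $\varphi(\tau)$, so mode consistency forces $F \leftrightarrow G$ at $\varphi(\tau)$ and contradicts the assumption.
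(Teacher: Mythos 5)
Your main argument has a fatal flaw: it treats the mode-consistency hypothesis as merely ``a backup,'' yet the lemma is simply false without it. Take the inconsistent split from \rref{ex:mode_consistency}: \(F \equiv x'=1 \land x < 0\), \(G \equiv x'=1 \land x \geq 0\), and the flow \(\varphi(t)(x) = t-1\) on \([0,2]\). All the other hypotheses hold (\(\varphi(0) \in \sem{F}\), \(F\lor G\) holds throughout, \(\varphi(1)\in\sem{G}\)), but there is no \(\tau\) with \(\varphi(\tau)\in\sem{F\land G}\): for \(\tau<1\) the flow is in \(F\setminus G\) and for \(\tau\ge 1\) it is in \(G\setminus F\). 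Since your principal derivation never touches the consistency assumption, it would ``prove'' this false instance. The step that breaks is the \(F\)-part appeal to \rref{lem:seq}: you pick \(s_k \uparrow \tau\) with \(\varphi(s_k)\in\sem{F}\) and conclude \(\varphi(\tau)\in\sem{F}\). Semi-algebraic sets are not closed (here \(F\) is the open constraint \(x<0\)), so membership along an increasing sequence does not pass to the limit point; the paper's companion result \rref{cor:zeros} is deliberately stated with an \emph{open} interval \((t,t+\delta_0)\) for exactly this reason, and the literal ``for all \(\tau\ge\tau_{k_0}\)'' reading of \rref{lem:seq} that you need is stronger than what the surrounding analytic-zeros argument can deliver.

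The paper's proof takes a different route that cannot be bypassed. It applies \rref{lem:progress} at the infimum \(t\) of the \(G\)-set to obtain the \emph{entry formula} of \(G\) at \(\varphi(t)\), and then case-splits on the \emph{exit formula} of \(F\). Mode consistency is used in exactly one case --- \(\varphi(t)\in\sem{\exit{\daesys{\tuple{x}}{F}}}\) --- where it forces \(F\leftrightarrow G\) at \(t\) and hence \(F\land G\). In the case where \(F\) holds at \(t\) but its exit formula does not, the argument uses \rref{lem:local_progress} and \rref{cor:zeros} to show \(F\) persists on a right-neighborhood of \(t\), and the witness \(\tau\) is \emph{not} the infimum but a point slightly past \(t\) where the infimum property supplies a nearby \(G\)-point inside that neighborhood; your ``always use the infimum'' plan cannot reach this witness. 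The third case yields a contradiction. Your ``backup'' contradiction sketch gestures at the first case only, and without the entry/exit machinery and the three-way split it does not cover the remaining possibilities.
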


\begin{proof}
  Fix \(T \geq 0, \epsilon > 0 \) and \(\varphi \in \analyticFlowsInterval\) with \(\varphi(0) \in
  \sem{F}\) and \(\varphi([0,T]) \subseteq \sem{F \lor G}\) and \(\varphi(\xi) \in \sem{G}\) for some \(\xi \in [0,T]\). If \(\varphi(0)
  \in \sem{G}\), the statement immediately follows. Otherwise,
  consider \(\varphi(0) \in \sem{\neg G}\).
  Define the set \(\mathcal{T} = \{t \vl \varphi(t) \in
  \sem{G}\}\). Choose \(t\) to be the infimum of
  \(\mathcal{T}\), which is well-defined because of \(\varphi(\xi) \in
  \sem{G}\). By the definition of the infimum, we have \(\varphi(\tau) \in
  \sem{\neg G}\) for all \(\tau \in [0,t)\). This implies
  \(\varphi(\tau) \in \sem{F}\) for all \(\tau \in [0,t)\) and
  \(\varphi (t) \in \enter{\daesys{x}{G}}\) by \rref{lem:progress}.
  We consider three separate cases:
  \begin{enumerate}
    \item \(\varphi(t) \in \sem{\exit{\daesys{\tuple{x}}{F}}}\): We show
      \(\varphi(t) \in \sem{F \land G}\): We have
      \(\varphi(t) \in \sem{\exit{\daesys{\tuple{x}}{F}} \land \enter{\daesys{\tuple{x}}{G}}}\).
      This implies \(\varphi(t) \in \sem{F \leftrightarrow
      G}\). By assumption, we have \(\varphi(t) \in
      \sem{F \lor G}\). Without loss of generality, we
      assume \(\varphi(t) \in \sem{F}\). This implies
      \(\varphi(t) \in \sem{F \land G}\).
    \item \(\varphi(t) \in \sem{\neg \exit{\daesys{\tuple{x}}{F}} \land
      F}\): This condition implies \(\varphi (t) \in \sem{\neg
      \progress{\daesys{\tuple{x}}{\neg F}}}\). If \(t = T\), then by assumption this implies \(\varphi(T) \in \sem{G}\) and also \(\varphi(T) \in \sem{F}\).
      Otherwise, consider \(t < T\). By \rref{lem:local_progress}, for all \(S
      > 0, \delta > 0\) and all flows \(\psi \in \analyticFlows{\tuple{x}}{(-\delta, S + \delta)}\) with \(\psi(0) = \varphi(t)\) there exists a \(\tau \in (0,S]\) such that \(\psi(\tau) \in \sem{F}\).
      In particular, this implies if that for all \(\delta \in (t,T]\) there exists a \(\tau \in (t, t + \delta)\) such that \(\varphi(\tau) \in \sem{F}\). By \rref{cor:zeros}, this implies
      that there exists a \(\delta_0 > 0\) such that \(\varphi((t, t +\delta_0)) \subseteq
      \sem{F}\). Similarly,
      there exists a \(t \leq \tau < t +
      \delta_0\) such that \(\varphi(\tau) \in \sem{G}\):
      assume for all \(t \leq \tau < t + \delta_0\) we have
      \(\varphi(\tau) \in \sem{\neg G}\). Then, \(t+\delta_0\)
      is a lower bound \(\mathcal{T}\), but \(t + \delta_0 > t\),
      which contradicts the assumption that \(t\) is the infimum of
      \(\mathcal{T}\). This implies \(\varphi(\tau) \in \sem{F}
      \cap \sem{G} = \sem{F \land G}\). 
    \item \(\varphi(t) \in \sem{\neg \exit{\daesys{\tuple{x}}{F}} \land
      \neg F}\): This condition implies \(\varphi(t) \in
      \sem{\neg\progress{\daesys{\tuple{x}}{F^-}}^-}\). We can easily see
      that this contradicts the existence of \(\varphi\) by
      constructing \(\psi \in \analyticFlowsInterval\) with
      \(\psi (\tau)(\tuple{x}) = \varphi(t-\tau)(\tuple{x})\) and
      \(\psi (\tau)(\tuple{x}') = -\varphi(t-\tau)(\tuple{x}')\)
      for \(\tau \in [0,t]\). For this flow we have \(\psi((0,t]) \subseteq \sem{F^-}\), which contradicts \(\varphi(t) = \psi(0)_{\tuple{x}'}^{-\psi(0)(\tuple{x}')}\in \sem{\neg\progress{\daesys{\tuple{x}}{F^-}}^-}\).
  \end{enumerate}
  In the first two cases, there exists a \(\tau \in [0,T]\) such that \(\varphi(\tau) \in \sem{F \land G}\) and the third case leads to a contradiction.
  \qed
\end{proof}

\begin{lemma}[Differential Lemma] \label{lem:diff}
  Let \(T \geq 0, \epsilon > 0\) and \(\varphi \in \analyticFlowsInterval\) be a flow and \(e\) be a term with \(\FV{e} \subseteq \tuple{x}\).
  Then we have
  \[
    \varphi(t)\sem{(e)'} = \frac{d \varphi(t)\sem{e}}{dt}
  \]
  for all \(t \in [0,T]\).
\end{lemma}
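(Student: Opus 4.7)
The plan is to prove the identity by structural induction on $e$, combining the semantic clause for $(e)'$ from \rref{def:terms} with the multivariate chain rule and the flow property from \rref{def:flow}.

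For the base cases: when $e = x$ is a variable, \rref{def:terms} collapses $\varphi(t)\sem{(x)'}$ to $\varphi(t)(x')$, and since $\FV{e} = \{x\} \subseteq \tuple{x}$, the flow property rewrites this as $\frac{d\varphi(t)(x)}{dt}$; the constant case $e = c$ is trivial because both sides vanish. For the inductive cases $e_1 + e_2$, $e_1 \cdot e_2$, and $f(e_1,\dots,e_k)$, I would unfold the semantics of $(\cdot)'$, apply the ordinary sum, product, and chain rules of calculus to $\frac{d}{dt}\varphi(t)\sem{e}$, and invoke the induction hypothesis on each $e_i$ to turn $\frac{d}{dt}\varphi(t)\sem{e_i}$ back into $\varphi(t)\sem{(e_i)'}$, which then combines into the required shape for $\varphi(t)\sem{(e)'}$. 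The nested case $e = (e_0)'$ is handled uniformly by observing that $\omega \mapsto \omega\sem{(e_0)'}$ is again a polynomial combination of base-variable values with partial derivatives of a term — hence real-analytic — so the same chain-rule reasoning applies without modification.

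A useful simplification is to truncate the variable-indexed sum in \rref{def:terms} using \rref{lem:coincidence}: whenever $y \notin \FV{e}$, one has $\partialDeriv{y}{\omega\sem{e}} = 0$, so only $x \in \FV{e} \subseteq \tuple{x}$ contribute to $\varphi(t)\sem{(e)'}$. Substituting $\varphi(t)(x') = \frac{d\varphi(t)(x)}{dt}$ via the flow property then produces exactly the multivariate chain-rule expansion of $\frac{d}{dt}\varphi(t)\sem{e}$, yielding the claimed equality.

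The main obstacle will be discharging the calculus preconditions, namely that $t \mapsto \varphi(t)\sem{e}$ is differentiable on $(-\epsilon, T+\epsilon)$ and that $\omega \mapsto \omega\sem{e}$ admits partial derivatives in each of its free variables. Both requirements are handled by \rref{lem:analytic}: the flow components $t \mapsto \varphi(t)(x)$ for $x \in \tuple{x}$ are real-analytic by \rref{def:flow}, and real-analytic functions are closed under sums, products, compositions with real-analytic function symbols, and partial differentiation, so the intermediate expressions encountered in the induction are all real-analytic. This ensures the chain rule applies throughout and reduces the proof to routine algebraic rearrangement at each induction step.
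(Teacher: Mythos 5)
Your proposal is correct, and in fact it contains two proofs layered on top of each other. Your primary route is a structural induction on $e$, while the paper's own proof is the direct computation you describe in your ``useful simplification'' paragraph: truncate the sum in $\varphi(t)\sem{(e)'}$ to the actual free variables of $e$ (using coincidence/\rref{lem:coincidence}), substitute $\varphi(t)(x') = \tfrac{d\varphi(t)(x)}{dt}$ via \rref{def:flow}, and recognize the result as the multivariate chain-rule expansion of $\tfrac{d}{dt}\varphi(t)\sem{e}$. The induction buys you an explicit discharge of the differentiability side conditions that the direct argument leaves implicit (that $\omega \mapsto \omega\sem{e}$ has partial derivatives and that $t \mapsto \varphi(t)\sem{e}$ is differentiable), which is a legitimate gain in rigor; the direct argument is shorter because these facts follow at once from \rref{lem:analytic} and the polynomial/real-analytic interpretation of terms without reinstating the rules of calculus case by case. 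One small correction: the nested case $e = (e_0)'$ that you set out to handle does not actually arise under the hypothesis, because $\FV{(e_0)'}$ contains the differential variables $x'$ for every $x \in \FV{e_0}$, whereas the lemma requires $\FV{e} \subseteq \tuple{x}$ (equivalently the paper writes $x' \not\in e$); so unless $e_0$ is a closed/constant term the hypothesis already excludes it, and you can drop that case entirely rather than argue it through.
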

\begin{proof}
  Let \(T \geq 0, \epsilon > 0\) and \(\varphi \in \analyticFlowsInterval\) be a flow and \(e\) a term with \(x'
  \not\in e\). Then, for \(t \in [0,T]\) we have
    \[
    \varphi(t)\sem{(e)'} 
    = \sum_{r \in e} \varphi(t)(r') \frac{\partial \varphi(t)\sem{e}}{\partial r} 
    = \sum_{r \in e} \frac{d \varphi(t)(r)}{d t} \frac{\partial
    \varphi(t)\sem{e}}{\partial r} = \frac{d \varphi(t)\sem{e}}{d t},
    \]
    by differential consistency of \(\tuple{x}\) and \(\tuple{x}'\) and the chain rule.
    \qed
\end{proof}

\begin{proof}
  While some of the axioms in \rref{thm:soundness} resemble those
  in \dL, special care must be taken to ensure their soundness,
  particularly due to the non-uniqueness of solutions and the
  requirement to show that solutions are analytic.

  Key considerations are:
  \begin{enumerate}
    \item DW combines the behavior of \dL's DW and DE
      \cite{DBLP:journals/jar/Platzer17}, if the derivative of all state
      variables is explicit.
    \item DE encodes the differential consistency between \(x\) and
      \(x'\) and was not included in the original calculus of \dL
      \cite{DBLP:journals/jar/Platzer17} in this form.
    \item BDG Differential ghosts need a new argument, because we must
      show that the flow of the augmented system analytic.
    \item AG generalizes differential cuts of \dL \cite{DBLP:journals/jar/Platzer17}.
    \item GS is an entirely new axiom.
  \end{enumerate}
  \begin{itemize}
    \item[DW] Let \(\omega \in \states\) be an arbitrary state. Let \(T \geq 0 \), \(\epsilon > 0\) and 
      \(\varphi \in
      \analyticFlowsInterval\) be a fixed flow with \(\varphi([0,T]) \subseteq \sem{F}\) and
      \(\varphi(0) = \omega\). Then, by \rref{def:flow}, we
      have \(\varphi(t) \in \sem{F}\) for all \(t \in [0,T]\). In
      particular, this implies \(\varphi(T) \in \sem{F}\).
    \item[C] Let \(\omega \in \states\) be a fixed state.
    Assume \(\omega \in \sem{[\daesys{\tuple{x}}{F \land G}]P}\). 
    Then, for all flows \(\varphi \in \analyticFlows{\tuple{x}}{-\epsilon, T+\epsilon}\) with \(\varphi([0,T]) \in \sem{F \land G}\) and \(\varphi(0) = \omega\) we have \(\varphi(T) \in \sem{P}\). Let \(\tilde{\varphi} \in \analyticFlowsInterval\) be a second flow with \(\varphi(0) = \omega\) and \(\varphi([0,T]) \in \sem{G \land F}\).
    By \(\sem{G \land F} = \sem{F \land G}\) we have \(\varphi([0,T]) \in \sem{F \land G}\).
    This implies \(\varphi(T) \in \sem{P}\). This implies \(\omega \in \sem{[\daesys{\tuple{x}}{G \land F}]P}\).
    The converse follows by symmetry.
    \item[DE] Let \(\omega \in \states\) be a fixed state. We need to
      show that \(\omega \in \sem{[\daesys{\tuple{x}}{e = 0}](e)' = 0}\). Let \(T \geq 0 \), \(\epsilon > 0\) and 
      \(\varphi \in
      \analyticFlowsInterval\) be an arbitrary flow with \(\varphi([0,T]) \in \sem{e = 0}\).  By assumption, we have \(\varphi(t)\sem{e} =
      0\) for all \(t \in [0,T]\).
      This implies
      \[\varphi(t)\sem{(e)'} = \frac{d \varphi(t)\sem{e}}{d t} = 0,\]
      by \rref{lem:diff}.
    \item[DX] Let \(\omega \in \sem{[\daesys{\tuple{x}}{F}]P}\).
      Then, for all \(T
      \geq 0\), \(\epsilon > 0\) and flows \(\varphi \in \analyticFlowsInterval\) with \(\varphi([0,T])
      \subseteq \sem{F}\) and \(\varphi(0) = \omega\), we have \(\varphi(T) \in \sem{P}\). To show \(\omega \in
      [?F]P\), we need to show if \(\omega \in \sem{F}\), then
      \(\omega \in \sem{P}\).
      Suppose \(\omega \not\in \sem{F}\), then the formula is trivially true. Otherwise, suppose \(\omega \in \sem{F}\) and choose \(T = 0\).
      Then, by \rref{rem:duration_zero},
      there exists one flow \(\varphi \in \analyticFlows{\tuple{x}}{-\epsilon, \epsilon}\) with \(\varphi(0)
      = \omega\).
      By assumption, this implies \(\omega = \varphi(0)
      \in \sem{P}\).
    \item[BDG] First, we show the implication \(\rightarrow\). For this direction we do not need the assumption of bounded growth: Let
    \(\omega \in \sem{[\daesys{\tuple{x}}{F}]P}\). Assume that \(\omega
    \in \sem{F}\). Otherwise, there is no flow that satisfies the
    differential constraint \(F\). Also, fix arbitrary \(T
    \geq 0, \epsilon > 0\) and a flow \(\varphi \in \analyticFlows{\tuple{x},\tuple{y}}{-\epsilon, T + \epsilon}\) with \(\varphi([0,T])
    \in \sem{F \land \tuple{y}'=g(\tuple{x},\tuple{y})}\) and \(\varphi(0) = \omega\). We need
    to show \(\varphi(T) \in \sem{P}\). In particular, we have
    \(\varphi([0,T]) \subseteq \sem{F}\). By assumption, this implies
    \(\varphi(T) \in \sem{P}\).

    Conversely, we show the implication \(\leftarrow\): Fix an arbitrary state \(\omega \in \sem{[\daesys{x}{F \land y'=g(x,y)}]||y||^2 \leq h(x,x')}\) and \(\omega \in \sem{[\daesys{\tuple{x}}{F \land \tuple{y}'=g(\tuple{x},\tuple{y})}]P}\). If \(\omega \not \in \sem{F}\), there are no \(T \geq 0, \epsilon > 0\) and flow \(\varphi \in \analyticFlowsInterval\) such that \(\varphi([0,T]) \subseteq \sem{F}\) and \(\varphi(0) = \omega\) and there is nothing to show.

    Assuming \(\omega \in \sem{F}\), take an arbitrary \(\varphi \in \analyticFlowsInterval\) with \(\varphi([0,T]) \subseteq \sem{F}\) and \(\varphi(0) = \omega\).
    Now, the idea is to construct a solution of the augmented system out of \(\varphi\) that has an equal interval of existence, and to exploit that \(P\) always holds after running that system.
    Define \(y(\cdot) : (-\delta,T_y) \rightarrow \R^m\) to be the right-maximal solution \cite[\S10.XI]{walter_ordinary_1998} of
    \[
      \frac{d y(t)}{dt}=g(\varphi(t)(\tuple{x}), y(t)), \quad y(0) = \omega(y).
    \]
    Observe that \(g(\varphi(\cdot)(\tuple{x}), \cdot)\) is a real-analytic function on \((-\epsilon,T+\epsilon)\) by \rref{lem:analytic}.
    Then, by the Cauchy-Kowalewsky Theorem \cite[Theorem 1.7.1]{krantz_primer_2002}, the solution is a real-analytic function on \((-\delta,T_y) \subseteq (-\epsilon, T + \epsilon)\).
    Now, we construct the mapping \(\varphi_y : (-\delta,T_y) \rightarrow \states\) with \(\varphi_y(z)(t) = \varphi(z)(t)\) on \(z \in (\tuple{y}\cup \tuple{y}')^\complement\), \(\varphi_y(t)(\tuple{y}) = y(t)\) and for all \(t \in (-\delta,T_y)\) and  \(\varphi_y(t)(\tuple{y}') = \frac{d y(t)}{dt}\) for \(t \in (-\delta,T_y)\).

    Now, we show \(T_y = T\). By assumption, we have 
    \[
      ||\varphi_y(\tau)(\tuple{y})||^2 \leq h(\varphi(\tau)(\tuple{x}), \varphi(\tau)(\tuple{x}')),
    \]
    for \(\tau \in [0,T]\).
    By the continuity of \(\varphi(\cdot)(\tuple{x})\) and \(\varphi(\cdot)(\tuple{x}')\), this implies \(||\varphi_y(\cdot)(\tuple{y})||^2\) attains its maximum on \([0,T]\).
    Therefore, the solution exhibits no blow up in finite time and by \cite[Theorem 1.4]{chicone_ordinary_2024}, this implies the domain of the solution is equal to the domain of definition of \(g(\varphi_y(\cdot)(\tuple{x}),\cdot)\), meaning \(T = T_y\).
    By the construction, the flows \(\varphi_y\) and \(\varphi\) only differ on \(\tuple{y} \cup \tuple{y}'\). By \rref{lem:coincidence}, this implies \(\varphi_y([0,T]) \subseteq \sem{F}\), because \(\tuple{y},\tuple{y}' \not \in F\).
    Therefore, we have \(\varphi_y([0,T]) \subseteq \sem{F \land \tuple{y}'=g(\tuple{x},\tuple{y})}\).
    Finally, by the assumption \(\omega \in \sem{[\daesys{\tuple{x}}{F \land \tuple{y}'=g(\tuple{x},\tuple{y})}]P}\), we have \(\varphi_y(T) \in \sem{P}\). 
    Again, using \rref{lem:coincidence}, implies \(\varphi(T) \in \sem{P}\), because \(\tuple{y},\tuple{y}' \not\in P\).

    \item[AG] First, we show the implication \(\rightarrow\): Let \(\omega \in \sem{[\daesys{\tuple{x}}{F}]P}\).
      Assume that \(\omega \in \sem{F}\). 
      Otherwise, there is no flow that satisfies the differential constraint \(F\). 
      Also, fix arbitrary \(T \geq 0, \epsilon > 0\) and a flow \(\varphi \in \analyticFlows{\tuple{x},\tuple{y}}{-\epsilon, T + \epsilon}\) with \(\varphi([0,T]) \subseteq \sem{F \land G(\tuple{y})}\) and \(\varphi(0) = \omega\). 
      We need to show \(\varphi(T) \in \sem{P}\).
      In particular, we have \(\varphi \subseteq \sem{F}\). 
      By assumption, this implies \(\varphi(T) \in \sem{P}\).

      Conversely, we show the implication \(\leftarrow\):
      Fix an arbitrary initial state \(\omega\) with \(\omega \in
      \sem{[\daesys{\tuple{x}}{F}]G(h(\tuple{x},\tuple{x}'))}\) and \(\omega \in \sem{\forall \tuple{y} \forall \tuple{y}'[\daesys{\tuple{x},\tuple{y}}{F\land G(\tuple{y})}]P}\).
      Fix \(T \geq 0, \epsilon > 0\) and an arbitrary flow \(\varphi \in \analyticFlowsInterval\) with \(\varphi([0,T]) \subseteq \sem{F}\) and \(\omega = \varphi(0)\). 
      We need to show that \(\varphi(T) \in \sem{F}\).
      By assumption, we have \(\varphi([0,T]) \subseteq \sem{G(h(\tuple{x},\tuple{x}'))}\).
      Now, we construct a mapping \(\varphi_y :(-\epsilon, T + \epsilon) \rightarrow \states\) with \(\varphi_y(0) = \omega\) and satisfies \(\varphi_y \subseteq \sem{F \land G(\tuple{y})}\), for any fresh variable \(\tuple{y},\tuple{y}' \not \in F, G, h(\cdot,\cdot)\).
      The key idea is that \(h\) implicitly defines a possible \(\tuple{y}\) as a function of \(\tuple{x}\) and \(\tuple{x}'\).
      Note that since, the term language of \dAL consists only of
      polynomial terms, \(h : \R^{2n} \rightarrow \R\) is a globally
      defined, real analytic function.
      This allows us to define \(\varphi_y\) as follows:
      \begin{align*}
        \varphi_y(t)(\tuple{y}) &= h(\varphi(t)(\tuple{x}),\varphi(t)(\tuple{x}')), \\
        \varphi_y(t)(\tuple{y}') &=
        \frac{\partial h}{\partial
        x}(\varphi(t)(\tuple{x}),\varphi(t)(\tuple{x}'))\varphi(t)(\tuple{x}') +
        \frac{\partial h}{\partial
        x'}(\varphi(t)(\tuple{x}),\varphi(t)(\tuple{x}'))\frac{d \varphi(t)(\tuple{x}')}{dt}, \\
        \varphi_y(t)(z) &= \varphi(t)(z) \text{ on } z \in (\tuple{y}
        \cup \tuple{y}')^\complement.
      \end{align*}
      
        By \rref{lem:analytic}, we see that \(\varphi_y(\cdot)(\tuple{x})\) is analytic on \((-\epsilon, T + \epsilon)\), because
        \(\varphi(\cdot)(\tuple{x})\) is analytic on \((-\epsilon, T + \epsilon)\).  Also, \(\varphi_y\) is differentially consistent in \(\tuple{y},\tuple{y}'\) by:
      \begin{align*}
        \frac{d \varphi_y(t)}{dt}(\tuple{y}) &= \frac{\partial
        h}{\partial x}(\varphi(t)(\tuple{x}),\varphi(t)(\tuple{x}'))\frac{d
        \varphi(t)}{dt}(\tuple{x}) +
        \frac{\partial h}{\partial
        x'}(\varphi(t)(\tuple{x}),\varphi(t)(\tuple{x}'))\frac{d \varphi(t)(\tuple{x}')}{dt}\\
        &= \frac{\partial h}{\partial
        x}(\varphi(t)(\tuple{x}),\varphi(t)(\tuple{x}'))\varphi(t)(\tuple{x}') +
        \frac{\partial h}{\partial
        x'}(\varphi(t)(\tuple{x}),\varphi(t)(\tuple{x}'))\frac{d \varphi(t)(\tuple{x}')}{dt}\\
        &= \varphi_y(t)(\tuple{y}').
      \end{align*}
      This implies \(\varphi_y \in \analyticFlowsInterval[y]\).
      Then by construction we have \(\varphi_y([0,T]) \subseteq \sem{F}\), by \rref{lem:coincidence} \(\tuple{y},\tuple{y}' \not\in F\), because \(\varphi\) and \(\varphi_y\) only differ on \(\tuple{y}\cup \tuple{y}'\).
      Also, we have \(\varphi_y([0,T])\subseteq \sem{G(\tuple{y})}\). 
      Therefore, we have \(\varphi_y(T) \in \sem{P}\) by \(\omega \in \sem{\forall \tuple{y}\forall \tuple{y}'[\daesys{\tuple{x},\tuple{y}}{F\land G(\tuple{y})}]P}\).
      By construction of \(\varphi_y\), we have \(\varphi(T) =
      \varphi_y(T)\) on \((\tuple{y}\cup \tuple{y}')^\complement\) and
      therefore \(\varphi(T) \in \sem{P}\), because
      \(\tuple{y},\tuple{y}' \not\in P\) and \rref{lem:coincidence}.
      \item[DI] We assume, without loss of generality, that \(P
        \equiv e(\tuple{x}) \geq 0\), where \((P)' \equiv (e(\tuple{x}))' \geq 0\).
        All other cases follow analogously.
        Fix an arbitrary state \(\omega \in \sem{[?F]P}\) and
        \(\omega \in \sem{F \rightarrow [\daesys{\tuple{x}}{F}](P)'}\).
        If \(\omega \not\in \sem{F}\), then there are no \(T \geq 0, \epsilon > 0\) and \(\varphi \in
        \analyticFlowsInterval\) that satisfy \(F\) and there is
        nothing to show.
        Otherwise, we have \(\omega \in
      \sem{[\daesys{\tuple{x}}{F}](P)'}\) and \(\omega \in \sem{P}\) by \(\omega \in
        \sem{[?F]P}\).
        Now, let \(T \geq 0, \epsilon > 0\) and \(\varphi \in \analyticFlowsInterval\)
      be a flow with \(\varphi([0,T]) \subseteq \sem{F}\) and \(\varphi(0) =
        \omega\). The case \(T = 0\) follows immediately, because
        \(\varphi(0) = \omega \in \sem{P}\).
        Now, assume
      \(T > 0\) this implies \(0 \leq \varphi(\xi)\sem{(e(\tuple{x}))'} =
      \frac{d \varphi(t)\sem{e(\tuple{x})}}{dt}(\tau)\) for all \(\tau \in
      [0,T]\), by \rref{lem:diff}.
      By the mean value theorem we have \(\varphi(T)\sem{e(\tuple{x})} - \varphi(0)\sem{e(\tuple{x})} = T \frac{d\varphi(t)\sem{e(\tuple{x})}}{dt}(\xi)\) for some \(\xi \in (0,T)\),
        since \(\varphi(\cdot)\sem{e(\tuple{x})}\) is analytic.
        Therefore, we have \(\varphi(T)\sem{e(\tuple{x})} - \varphi(0)\sem{e(\tuple{x})} \geq 0\).
        By \(\varphi(0) \in \sem{e(\tuple{x}) \geq 0}\), this implies \(\varphi(T)
        \in \sem{e(\tuple{x}) \geq 0}\).
      \item[GS] Fix an arbitrary state \(\omega\) with \(\omega \in \sem{[\daesys{\tuple{x}}{F \lor G}](\consis{F}{G} \land \consis{G}{F})}\) and \(\omega \in
        \sem{F \lor G \rightarrow [\{\daesys{\tuple{x}}{F} \cup
      \daesys{\tuple{x}}{G}\}^*]P}\).
      We need to show \(\omega \in \sem{[\daesys{\tuple{x}}{F \lor G}]P}\): Assume \(\omega \in \sem{F
      \lor G}\).
      Otherwise, there is no flow satisfying the
      differential constraint \(F \lor G\) and there is nothing to prove.
      Now, fix \(T \geq 0, \epsilon > 0\) and \(\varphi \in \analyticFlowsInterval\) such that \(\varphi([0,T]) \subseteq \sem{F \lor G}\) and \(\varphi(0) = \omega\).
      Without loss of generality assume \(\omega \in \sem{F}\); the case \(\omega \in \sem{G}\) is analogous.
      Suppose there exists a monotonically increasing sequence \((\tau_{k})_{k \in \N_0}\) starting in zero with \(\varphi(0) = \varphi(\tau_0) = \omega\), \(\varphi(\tau_k) \in \sem{F}\) and \(\varphi(\tau_{k+1}) \in \sem{G}\) for \(k \geq 1\).
      Because we have \(\varphi([0,T]) \subseteq \sem{\consis{F}{G}}\) and \(\varphi([0,T]) \subseteq \sem{\consis{G}{F}}\) by assumption, using \rref{lem:seq}, there is a \(k_0 \in
      \N\) such that \(\varphi(\tau) \in \sem{F}\) for all \(\tau_{k_0} \leq \tau \leq T\) or  \(\varphi(\tau) \in \sem{G}\) for all \(\tau_{k_0} \leq \tau \leq T\).
      By \rref{lem:mode_consistency}, there exists a sequence \((t_k)_{k
      =0}^{k_0-1}\) with \(\tau_k \leq t_k \leq \tau_{k+1}\) such
      that \(\varphi([\tau_k, t_k]) \subseteq \sem{F}\) and
      \(\varphi([t_k, \tau_{k+1}]) \subseteq \sem{G}\) for all \(k < k_0\)
      and either \(\varphi([\tau_{k_0},T]) \subseteq \sem{F}\) or
      \(\varphi([\tau_{k_0},T]) \subseteq \sem{G}\).
      Next, we transform the loop assumption to apply it to the flow \(\varphi\). By \rref{def:pro}, we have
      \begin{align*}
        \omega &\in \sem{[\{\daesys{x}{F} \cup
        \daesys{\tuple{x}}{G}\}^*]P} = \bigcup_{k\in\N} \sem{[\{\daesys{\tuple{x}}{F} \cup
        \daesys{\tuple{x}}{G}\}^k]P}.
      \end{align*}
      In particular, this implies \(\omega \in \sem{[\{\daesys{\tuple{x}}{F} \cup
      \daesys{\tuple{x}}{G}\}^{k_0 + 1}]P}\).
      Unrolling the loop one step, we have \(\omega \in \sem{[\daesys{\tuple{x}}{F} \cup
      \daesys{\tuple{x}}{G}][\{\daesys{\tuple{x}}{F} \cup
      \daesys{\tuple{x}}{G}\}^{k_0}]P}\).
      By \rref{def:pro}, we have \(\omega \in \sem{[\daesys{\tuple{x}}{F}][\{\daesys{x}{F} \cup
      \daesys{\tuple{x}}{G}\}^{k_0}]P}\).
      We can inductively repeat this argument to obtain \(\omega \in \sem{[\daesys{\tuple{x}}{F}][\daesys{\tuple{x}}{G}]\dots[\daesys{\tuple{x}}{F}]P}\) and \(\omega \in \sem{[\daesys{\tuple{x}}{F}][\daesys{\tuple{x}}{G}]\dots[\daesys{\tuple{x}}{G}]P}\).
      By the construction of \(\varphi\), this implies \(\varphi([t_k, \tau_{k+1}])\subseteq \sem{P}\) for all \(k = 1,\dots, k_0-1\) or \(\varphi([\tau_{k_0}, T]) \subseteq \sem{P}\).
      This implies
      \[
        \varphi([0,T]) = \varphi([\tau_{k_0}, T]) \cup \bigcup_{k=0}^{k_0}\varphi([t_k, \tau_{k+1}])\subseteq \sem{P}.
      \]
      \qed
  \end{itemize}
\end{proof}

\subsection{Derived Rules}

\begin{proof}
  We derive the rules from the base axioms and proof rules:
  \begin{enumerate}
    \item[AR] We start the proof by using an algebraic ghost on the box in the consequent:
      \begin{prooftree}
              \AX$\fCenter [\daesys{\tuple{x}}{G}]P  \vdash  \forall \tuple{y} \forall \tuple{y}'[\daesys{\tuple{x},\tuple{y}}{F\land G}]P$
              \AX$\fCenter \forall \tuple{x} \forall \tuple{x}'(F \rightarrow G) \vdash
              [\daesys{\tuple{x}}{F}]G$
              \def\defaultHypSeparation{\hskip 1pt}
        \LeftLabel{WL,AR}
              \BI$\fCenter \forall \tuple{x} \forall \tuple{x}'(F \rightarrow
            G),[\daesys{\tuple{x}}{G}]P
              \vdash  [\daesys{\tuple{x}}{F}]P$
        \LeftLabel{\impliesright}
              \UI$\fCenter \vdash \forall \tuple{x} \forall \tuple{x}' (F \rightarrow G) \rightarrow
              ([\daesys{\tuple{x}}{G}]P \rightarrow [\daesys{\tuple{x}}{F}]P)$
      \end{prooftree}
      Note that we have the side condition \(\tuple{y},\tuple{y}' \not \in F,G\). This can be enforced by using fresh variables \(\tuple{y},\tuple{y}'\).
      We use the derived axiom \[\gray{\text{DW}\rightarrow} \quad [\daesys{\tuple{x}}{F}]G \leftrightarrow [\daesys{\tuple{x}}{F}](F \rightarrow G),\] which can be derived from K, DW and G.
      Using this axiom and the G rule, we can use
      the algebraic condition to close the right goal:
      \begin{prooftree}
              \AX$\fCenter *$
        \LeftLabel{\(\R\)}
              \UI$\forall \tuple{x} \forall \tuple{x}'(F \rightarrow G)\fCenter\,\vdash F \rightarrow G$
        \LeftLabel{\g}
              \UI$\forall \tuple{x} \forall \tuple{x}'(F \rightarrow G)\fCenter\,\vdash [\daesys{\tuple{x}}{F}](F \rightarrow G)$
        \LeftLabel{\dwimplies}
              \UI$\forall \tuple{x} \forall \tuple{x}'(F \rightarrow G)\fCenter\,\vdash [\daesys{\tuple{x}}{F}]G$
      \end{prooftree}
      The application of the G rule is sound, because we have \[\BV{\daesys{\tuple{x}}{F}} \cap \FV{\forall x \forall x'(F \rightarrow G)} = \varnothing.\]
      Finally, we use the C and DR axioms, to remove \(F\)
      from the box in the goal to finish the proof:
      \begin{prooftree}
        \AX$\fCenter \quad \quad \quad \quad \;\;*$
        \LeftLabel{id}
        \UI$\fCenter[\daesys{\tuple{x}}{G}]P  \vdash  [\daesys{\tuple{x}}{G}]P$
        \LeftLabel{DR}
        \UI$\fCenter[\daesys{\tuple{x}}{G}]P  \vdash  \forall \tuple{y} \forall \tuple{y}'[\daesys{\tuple{x},\tuple{y}}{G\land F}]P$
        \LeftLabel{C}
        \UI$\fCenter[\daesys{\tuple{x}}{G}]P  \vdash  \forall \tuple{y} \forall \tuple{y}'[\daesys{\tuple{x},\tuple{y}}{F\land G}]P$
      \end{prooftree}
      \item[DC] We derive the differential cut axiom. First, use the implies right rule:
      \begin{prooftree}
        \AX$\fCenter [\daesys{\tuple{x}}{F}]G \vdash [\daesys{\tuple{x}}{F \land G}]P \leftrightarrow[\daesys{\tuple{x}}{F}]P$
        \LeftLabel{\(\rightarrow\)R}
        \UI$\fCenter \vdash [\daesys{\tuple{x}}{F}]G \rightarrow \left([\daesys{\tuple{x}}{F \land G}]P \leftrightarrow[\daesys{\tuple{x}}{F}]P\right)$
      \end{prooftree}
      We split the equivalence into two goals:
      \begin{prooftree}
        \AX$\fCenter [\daesys{\tuple{x}}{F}]G, [\daesys{\tuple{x}}{F}]P \vdash [\daesys{\tuple{x}}{F \land G}]P$
        \AX$\fCenter [\daesys{\tuple{x}}{F}]G, [\daesys{\tuple{x}}{F \land G}]P \vdash [\daesys{\tuple{x}}{F}]P$
        \LeftLabel{\(\leftrightarrow\)}
        \BI$\fCenter [\daesys{\tuple{x}}{F}]G \vdash [\daesys{\tuple{x}}{F \land G}]P \leftrightarrow[\daesys{\tuple{x}}{F}]P$
      \end{prooftree}
      For the first goal, we use the AG axiom to augment the system with the ghost variable \(y\) using the formula \(\top \equiv x^2 \geq 0\), which holds in all states. This augmentation does not change the behavior of the systems but allows us to add the unrestricted ghost variable \(y\).
      \begin{prooftree}
        \AX$\fCenter  \; *$
        \LeftLabel{\(\R\)}
        \UI$\fCenter  \; \vdash \top$
        \LeftLabel{G}
        \UI$ \fCenter \; \vdash [\daesys{\tuple{x}}{F \land G}]\top$
        \AX$[\daesys{\tuple{x}}{F}]G, [\daesys{\tuple{x}}{F}]P \fCenter \;\vdash \forall \tuple{y} \forall \tuple{y}'[\daesys{\tuple{x},\tuple{y}}{F \land G \land \top}]P$
        \LeftLabel{AG}
        \BI$[\daesys{\tuple{x}}{F}]G, [\daesys{\tuple{x}}{F}]P \fCenter \;\vdash [\daesys{\tuple{x}}{F \land G}]P$
      \end{prooftree}
      Now, we use the AR to get rid of the additional formula and close the remaining goal:
      \begin{prooftree}
        \AX$\fCenter *$
        \LeftLabel{id}
        \UI$[\daesys{\tuple{x}}{F}]P \fCenter \;\vdash [\daesys{\tuple{x}}{F}]P$
        \LeftLabel{DR}
        \UI$[\daesys{\tuple{x}}{F}]P \fCenter \;\vdash \forall \tuple{y} \forall \tuple{y}'[\daesys{\tuple{x},\tuple{y}}{F \land G}]P$
        \LeftLabel{AR}
        \UI$[\daesys{\tuple{x}}{F}]P \fCenter \;\vdash \forall \tuple{y} \forall \tuple{y}'[\daesys{\tuple{x},\tuple{y}}{F \land G \land \top}]P$
      \end{prooftree}
      The second goal can be derived analogously.
      \begin{prooftree}
        \AX$\fCenter *$
        \LeftLabel{id}
        \UI$[\daesys{\tuple{x}}{F}]G, [\daesys{\tuple{x}}{F \land G}]P \; \fCenter \vdash [\daesys{\tuple{x}}{F \land G}]P$
        \LeftLabel{R,AR}
        \UI$[\daesys{\tuple{x}}{F}]G, [\daesys{\tuple{x}}{F \land G}]P \; \fCenter \vdash \forall \tuple{y} \forall \tuple{y}'[\daesys{\tuple{x},\tuple{y}}{F \land G}]P$
        \LeftLabel{AG}
        \UI$[\daesys{\tuple{x}}{F}]G, [\daesys{\tuple{x}}{F \land G}]P \; \fCenter \vdash [\daesys{\tuple{x}}{F}]P$
      \end{prooftree}
      \item[\(\land\)DE] Splitting the equivalence, we have:
      \begin{prooftree}
        \AX$\fCenter [\daesys{\tuple{x}}{F \land e =
        0}]P\vdash  [\daesys{\tuple{x}}{F \land e = 0 \land (e)' = 0}]P$
        \noLine
        \UI$\fCenter [\daesys{\tuple{x}}{F \land e = 0 \land (e)' = 0}]P\vdash [\daesys{\tuple{x}}{F \land e =
        0}]P$
        \LeftLabel{\(\leftrightarrow\)}
        \UI$\fCenter \vdash [\daesys{\tuple{x}}{F \land e =
        0}]P \leftrightarrow [\daesys{\tuple{x}}{F \land e = 0 \land (e)' = 0}]P$
      \end{prooftree}
      The first proof goal follows from R. We transform the second goal using DC:
      \begin{prooftree}
        \AX$\fCenter [\daesys{\tuple{x}}{F \land e = 0 \land (e)' = 0}]P\vdash [\daesys{\tuple{x}}{F \land e =
        0 \land (e)' = 0}]P$
        \noLine
        \UI$\fCenter [\daesys{\tuple{x}}{F \land e = 0 \land (e)' = 0}]P\vdash [\daesys{\tuple{x}}{F \land e =
        0}](e)'= 0$
        \LeftLabel{DC}
        \UI$\fCenter [\daesys{\tuple{x}}{F \land e = 0 \land (e)' = 0}]P\vdash [\daesys{\tuple{x}}{F \land e =
        0}]P$
      \end{prooftree}
      Finally, the first goal closes by id and the second goal by AR and DE.
  \end{enumerate}
  \qed
\end{proof}

\section{Extended Example Proofs} \label{sec:examples}
In this section, we show the detailed proof to close the goal involving the mode consistency formula. Using the \(G\) rule we get rid of the box modality we obtain
\begin{prooftree}
  \AX$\fCenter \vdash \consisxy{G(x,y)}{F(x,y)}$
  \AX$\fCenter \vdash \consisxy{F(x,y)}{G(x,y)}$
  \LeftLabel{\(\land\)R}
  \BI$\fCenter \vdash \consisxy{F(x,y)}{G(x,y)} \land \consisxy{G(x,y)}{F(x,y)}$
  \LeftLabel{G}
  \UI$\fCenter \vdash [\{\daesys{x}{F(x,y) \lor G(x,y)}\}](\consisxy{F(x,y)}{G(x,y)} \land \consisxy{G(x,y)}{F(x,y)})$
\end{prooftree}
This leaves two open goal to prove. We proceed by computing the entry and exit formulas of $F(x,y)$ and $G(x,y)$.
The exit formula of $F(x,y)$ is given by:
\begin{align*}
    &\exit{\daesys{x}{F(x,y)}} \equiv \\
    &\quad (x' = 0 \land x^2=1 \land y = 0) \land \progress{\daesys{x,y}{x' \neq 0 \lor x^2 \neq 1 \lor y \neq 0}}\\
    &\quad\quad\lor (x' \neq 0 \lor x^2 \neq 1 \lor y \neq 0) \land \progress{\daesys{x,y}{x' = 0 \land x^2=1 \land y = 0}}^- \\
\end{align*}
This implies \(\exit{\daesys{x}{F(x,y)}} \equiv x' = 0 \land x^2=1 \land y = 0\).
Its entry formula is given by:
\begin{align*}
  &\enter{\daesys{x}{F(x,y)}} \equiv \\
  &\quad \progress{\daesys{x,y}{x' = 0 \land x^2=1 \land y = 0}} \land (x' \neq 0 \lor x^2 \neq 1 \lor y \neq 0)\\
  &\quad\quad\lor \progress{\daesys{x,y}{x' \neq 0 \lor x^2 \neq 1 \lor y \neq 0}}^- \land (x' = 0 \land x^2=1 \land y = 0) \\
\end{align*}
This also simplifies to \(\enter{\daesys{x}{F(x,y)}} \equiv x' = 0 \land x^2=1 \land y = 0\).

The exit formula of $G(x,y)$ is given by:
\begin{align*}
  &\exit{\daesys{x,y}{G(x,y)}} \equiv \\
  &\progress{\daesys{x,y}{x' \neq y \lor x^2 + y^2 \neq 1  \lor y'\neq -x}} \land (x' = y \land x^2 + y^2 = 1  \land y'=-x)\\
  &\quad \lor \progress{\daesys{x,y}{x' = -y \land x^2 + y^2 = 1  \land y'=x}}^- \land x' \neq y \lor x^2 + y^2 \neq 1  \lor y'\neq -x\\
\end{align*}
This simplifies to \(\exit{\daesys{x,y}{G(x,y)}} \equiv x' = y \land x^2 + y^2 = 1  \land y'=-x\).
Finally, the entry formula of G(x,y) is given by:
\begin{align*}
  &\enter{\daesys{x,y}{G(x,y)}} \equiv \\
  &\progress{\daesys{x,y}{x' = y \land x^2 + y^2 = 1  \land y'=-x}} \land x' \neq y \lor x^2 + y^2 \neq 1  \lor y'\neq -x\\
  &\quad\lor \progress{\daesys{x,y}{x' \neq -y \lor x^2 + y^2 \neq 1  \lor y'\neq x}}^- \land (x' = y \land x^2 + y^2 = 1  \land y'=-x)\\
\end{align*}
Which also simplifies to \(\enter{\daesys{x,y}{G(x,y)}} \equiv x' = y \land x^2 + y^2 = 1  \land y'=-x\).
Now, using the computed formulas, we can easily verify using real arithmetic reasoning that both 
\begin{align*}
  \exit{\daesys{x}{F(x,y)}} \land  \enter{\daesys{x,y}{G(x,y)}} &\vdash F(x,y) \leftrightarrow G(x,y),\\
  \exit{\daesys{x}{G(x,y)}} \land  \enter{\daesys{x,y}{F(x,y)}} &\vdash F(x,y) \leftrightarrow G(x,y).
\end{align*}
This finishes the proof.
\end{document}